\newtheorem{theorem}{\it Theorem}
\newtheorem{lemma}{\it Lemma}
\begin{document}

\title{A Worst-Case Approximate Analysis of Peak Age-of-Information Via  Robust Queueing Approach}

\author{Zhongdong Liu, Yu Sang, Bin Li, and Bo Ji
\thanks{This work was supported in part by the NSF under Grants CCF-1657162, CNS-1651947, CNS-1717108, and CNS-1815563.}
\thanks{Zhongdong Liu (zhongdong@vt.edu) and Bo Ji (boji@vt.edu) are with the Department of Computer Science, Virginia Tech, Blacksburg, VA. 
Yu Sang (yu.sang@temple.edu) is with the Department of Computer and Information Sciences, Temple University, Philadelphia, PA. 
Bin Li (binli@uri.edu) is with the Department of Electrical, Computer and Biomedical Engineering, University of Rhode Island, Kingston, Rhode Island. 
}}

\maketitle







\begin{abstract}
A new timeliness metric, called \emph{Age-of-Information (AoI)}, has recently attracted a lot of research interests for real-time applications with information updates.
It has been extensively studied for various queueing models based on the probabilistic approaches, \textcolor{black}{where the analyses heavily depend on the properties of specific distributions (e.g., the memoryless property of the exponential distribution or the \textit{i.i.d.} assumption).}
In this work, we take an alternative new approach, the robust queueing approach, to analyze the \emph{Peak Age-of-Information (PAoI)}.
Specifically, we first model the uncertainty in the stochastic arrival and service processes using uncertainty sets.
This enables us to approximate the expected PAoI performance for very general arrival and service processes, including those exhibiting heavy-tailed behaviors \textcolor{black}{or correlations, where traditional probabilistic approaches cannot be applied.}
We then derive a new bound on the PAoI in the single-source single-server setting.
Furthermore, we generalize our analysis to two-source single-server systems with symmetric arrivals, which involves new challenges (e.g., the service times of the updates from two sources are coupled in one single uncertainty set).
Finally, through numerical experiments, we show that our new bounds provide a good approximation for the expected PAoI. 
Compared to some well-known bounds in the literature (e.g., one based on Kingman's bound \textcolor{black}{under the \textit{i.i.d.} assumption}) that tends to be inaccurate under light load, our new approximation is accurate under both light and high loads, both of which are critical scenarios for the AoI performance.
\end{abstract}

\section{Introduction}
The  last decades have witnessed significant advances in the computing capabilities of mobile and sensing devices.
The communication capacity of networks has also increased by orders of magnitude.
These developments have spawned a wide variety of real-time applications that require timely information updates. 
A practically important example of such emerging applications is autonomous vehicular systems, 
where real-time vehicular status information (location, velocity, acceleration, etc.) needs to be shared 
with nearby vehicles to enable full self-driving capability~\cite{vehicle1,vehicle2}.
Other examples include sensor networks for environmental monitoring, weather or news update applications, and live streaming services.

For such real-time services that require timely information updates, a major concern is about the freshness of the data delivered to the receiver.
Those commonly used metrics, such as throughput and delay, cannot precisely measure this timeliness related feature \cite{kaul2012real}.
To that end, in the seminal work~\cite{vehicle1}, a timeliness metric called \emph{Age-of-Information (AoI)} is proposed to measure the freshness of the received data.
It is defined as the time elapsed since the most recently received update was generated (see Eq.~\eqref{eq:age} for a formal definition).
In this work, we focus on the metric of \textit{Peak Age-of-Information (PAoI)} \cite{costa2014age}, which is defined as the maximum value of the AoI before it drops due to a newly delivered fresh update. 
The PAoI is a critical metric for certain time-sensitive applications that have very stringent timeliness requirements.
Consider autonomous vehicular systems, where
the status information must be shared with nearby vehicles.
The PAoI has to be strictly lower than a certain threshold at all times so that the autonomous vehicular controller makes correct decisions and thus ensures safety and efficiency.
Clearly, in such applications, it is insufficient to guarantee a certain level of average AoI only.

Most prior work takes a probabilistic approach to analyze and optimize the AoI and PAoI, which is based on the assumption that the interarrival time and service time follow certain distributions.
For example, the exponential distribution has played a privileged role in modeling stochastic systems \cite{kaul2012real,kaul2012status,costa2014age,sun2016update, sang2017power,pappas2015age,yates2012real,huang2015optimizing,dogan2020multi,moltafet2020approximate}.
In order to make the analysis tractable, the service times are often assumed to be identically and independently distributed (\textit{i.i.d.}).
Although these assumptions lead to tractable performance analysis and optimization, such assumptions may not hold in many practical scenarios. On the other hand, general distributions introduce significant challenges to near-exact analysis of the system performance.
It becomes even more challenging if the interarrival times and service times have heavy-tailed distributions and are potentially interdependent.

%

To that end, we take an alternative approach to model queueing systems based on \emph{robust optimization}~\cite{robustoptimization} and \emph{robust queueing theory}~\cite{robust}, which is originally developed for approximating the system time. 
Using this new analytical framework, we model the uncertainty in the stochastic arrival and service processes using \emph{uncertainty sets} and approach the problem using a robust optimization formulation. Note that the robust optimization theory can be used to model both light-tailed and heavy-tailed systems.
Therefore, our analysis no longer relies on the assumption of specific distributions with attractive properties (e.g., the memoryless property or even \textit{i.i.d.} arrival and service processes).
Instead, only the first and second order statistical information (i.e., mean interarrival/service time and variance) is required for the analysis.
For the single-source single-server system, we derive an upper bound on the worst-case system time under the assumption of uncertainty sets, which will be used to approximate the expected PAoI performance.
\textit{While Kingman's bound \cite{kingman1970inequalities,ciucu2018two} is also a well-known approximation of the system time, it requires \textit{i.i.d.} interarrival and/or service times. Also, both Kingman's bound and the original robust queueing analysis in \cite{robust} are accurate only when the traffic load is high. In contrast, our approach does not have such limitations.}
Furthermore, we generalize the analysis to the two-source single-server setting with symmetric traffic arrivals.
In this scenario, the updates from two different sources will be processed by a single server in a shared manner, which makes the PAoI analysis more challenging.

We summarize our main contributions as follows. 
\begin{itemize}
\item \emph{To the best of our knowledge, this is the first work that applies the robust-queueing approach to analyzing the PAoI performance in information-update systems.}
This new analytical framework can be applied to a wide range of queueing models without any assumption of specific distributions.
In particular, it works well for the systems with non-\textit{i.i.d.} interarrival  times  and  service  times.
\item We consider a single-source system and derive an upper bound on the PAoI. The upper bound can be used to develop approximations that are very close to the expected PAoI under both light and high traffic loads.
This is particularly important to the AoI analysis, as both long service times (when the load is high) and long interarrival times (when the load is light) would result in a large AoI.
\item We further generalize the analysis to the two-source setting with symmetric arrivals. The generalization is non-trivial and involves new challenges.
One key challenge is that the service times of the updates from two sources are coupled in one single uncertainty set.
Therefore, the property of uncertainty set cannot be directly used for analyzing the PAoI performance of each source.
\item Finally, we perform extensive numerical experiments and evaluate the PAoI performance
under different traffic loads as well as for different stochastic processes.
The simulation results show that our new bound with properly chosen parameters of uncertainty sets provides accurate approximations for the PAoI performance.
\end{itemize}

The remainder of this paper is organized as follows.
We discuss the related work on AoI and robust queueing theory in Section~\ref{sec:relatedwork}.
Then, we describe our model and provide our analysis for a single-source system in Section~\ref{sec:single}.
In Section~\ref{sec:multiple}, we generalize our results to the two-source case.
Finally, we present the numerical results in Section~\ref{sec:simulation}
and make concluding remarks in Section~\ref{sec:conclusion}.

\section{Related Work}\label{sec:relatedwork}
The AoI, a recently proposed metric, has inspired a series of studies on the analysis and optimization of the timeliness performance (see \cite{kosta2017age,sun2019age,2020arXiv200708564Y} for a survey).
The notion of AoI is formally introduced in \cite{kaul2012real}, where the authors analyze the time-average AoI in M/M/$1$, M/D/$1$, and D/M/$1$ systems under the First-Come-First-Served (FCFS) policy.
In \cite{kaul2012status}, the average AoI is analyzed for the M/M/$1$ system under the Last-Come-First-Served (LCFS) policy with and without preemption.  
In \cite{costa2014age}, the AoI performance of the FCFS policy in the M/M/$1$/$1$ and M/M/$1$/$2$ queues is studied, where new arrivals are discarded if the buffer is full. 
More sophisticated models have also been considered in the literature, such as two-source systems \cite{pappas2015age,yates2012real,huang2015optimizing} and multi-server systems \cite{bedewy2016optimizing,sang2017power,2019arXiv191208722L}.
However, most of the previous studies adopt the traditional probabilistic analytical framework and assume that the interarrival time and service time follow certain distributions.


In order to overcome the limitations of the probabilistic framework, recent work also considers other approaches to calculate the average AoI. 
For example, the authors in \cite{inoue2019general} derive the stationary distribution of the AoI, which is in terms of the stationary distribution of the delay and the PAoI. With the AoI distribution, one can analyze the mean or higher moments of the AoI in GI/GI/1, M/GI/1, and GI/M/1 queues under several scheduling policies (e.g., FCFS and LCFS). 
The authors in \cite{champati2019distribution} characterize the violation probability of AoI and use it to obtain upper bounds of AoI for GI/GI/$1$/$1$ and GI/GI/$1$/$2^*$ systems.
However, these studies are still based on the assumption of \textit{i.i.d.} random variables, which is not required in our approach. 
In \cite{yates2018age}, the stochastic hybrid system (SHS) is introduced as an analytical technique for studying the AoI. The SHS method provides a way to derive closed-form AoI results for simple queues described as finite-state Markov chains.
However, the SHS-based technique is limited to finite-state systems only (e.g., finite-buffer queueing systems) and is inapplicable to our problem.

In this paper, we propose an alternative approach based on the robust queueing framework to analyze the AoI.
In what follows, we briefly discuss recent developments of robust optimization and robust queueing theory.
Robust optimization has been proven to be an efficient approach for complex optimization problems with significant uncertainties \cite{bandi2012tractable, robustoptimization}.
This approach is later adopted in the development of robust queueing theory \cite{robust}, 
which can be used to provide fairly accurate predictions of the performance of complex queueing systems without making any probabilistic assumptions.
Note that in the traditional probabilistic approaches, it is often assumed that the underlying queueing system has the Markovian property (i.e., exponentially distributed interarrival times and service times) and 
that the interarrival times and the service times are both \textit{i.i.d.}
In contrast, using the robust-queueing approach, we no longer make such probabilistic assumptions.
Motivated by the Central Limit Theorem (CLT) and its generalized version, the randomness is modeled as uncertainty sets rather than specific probabilistic distributions.
Moreover, the interarrival times and service times do not have to be \textit{i.i.d.} over time \cite{robust}.
This enables us to characterize the queueing performance for very general arrival and service processes, including those exhibiting heavy-tailed behaviors or correlations.
In this paper, we will adopt this framework for the PAoI analysis in information-update systems.

\section{Single-Source System}\label{sec:single}
In this section, we discuss a simple case with one source, one server and one monitor.
We then generalize the analysis to the setting with two pairs of source and monitor in Section~\ref{sec:multiple}.

\subsection{System Model}\label{sec:singlemodel}

In a single-source information-update system, there is only one pair of source and monitor as illustrated in Fig.~\ref{fig:single_model}.
Each update is stamped with the time when it is generated.
Let $a_n$ denote the generation time of the $n$-th update.
Then the interarrival time between the $n$-th update and the $(n-1)$-st update can be denoted by $T_n \triangleq a_n - a_{n-1}$.
We assume that the interarrival time follows a general distribution with mean $1/\lambda$. 
After an update is generated, it needs to be processed by the server before it is delivered to the monitor.
The server has a FCFS queue of infinite buffer size.
We assume that the update arrives to the queue immediately after being generated. 
Hence, the $n$-th update arrives to the queue at time $a_n$ as well.
Let $W_n$ denote the waiting time of the $n$-th update.
The service time of the $n$-th update is denoted by $X_n$, which also follows a general distribution with mean $1/\mu$.
Let $f_n$ denote the time when the service of the $n$-th update at the server is finished.
After the update completes its service, it will be immediately delivered to the monitor.
Therefore, the update arrives at the monitor also at time $f_n$.
Let $S_n$ denote the total system time experienced by the $n$-th update, which is also equal to the sum of its waiting time in the queue and the service time,
\begin{equation}
    S_n \triangleq  f_n - a_n = W_n + X_n.
\end{equation}

\begin{figure}[!t]
\centering
\includegraphics[width=0.35\textwidth]{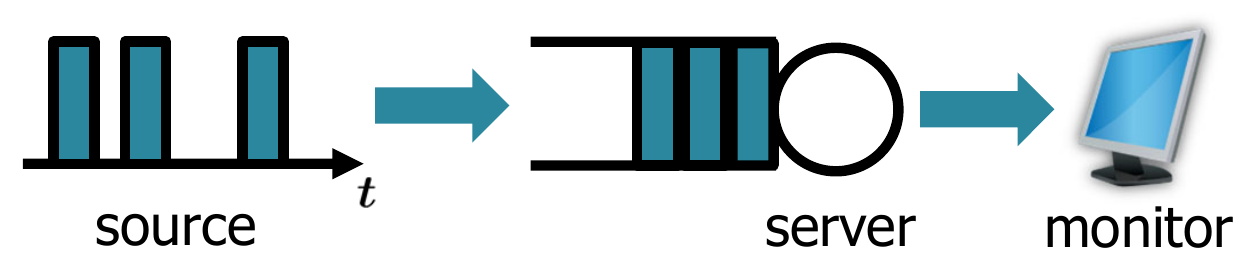}
\caption{A single-source information-update system.}
\label{fig:single_model}
\end{figure}

For such a time-sensitive information-update system, we are interested in the freshness of data at the monitor.
The freshness of data is measured by the metric \emph{Age-of-Information} (AoI).
It is defined as the time elapsed since the freshest update received by the monitor was generated. 
Assume that the latest update received by the monitor at time $t$ is stamped with the generation time $a(t)$, then the AoI  at the monitor is
\begin{equation}\label{eq:age}
\Delta(t) \triangleq  t - a(t).
\end{equation}
An example of the evolution of the AoI at the monitor is shown in Fig.~\ref{fig:sawtooth}.
The AoI would increase linearly when no update is transmitted to the monitor, and it reaches a local maximum value immediately before a new update is delivered.
Such a maximum value is called the \emph{Peak Age-of-Information (PAoI)}. There is a PAoI corresponding to each update in our example.
\textcolor{black}{Let $P_n$ denote the $n$-th PAoI. From Fig.~\ref{fig:sawtooth}, it is easy to see ${P_n} = {f_n} - {a_{n-1}}$. This can be rewritten as the sum of the interarrival time between the $n$-th and $(n-1)$-st updates (i.e., ${T_n}={a_n} - {a_{n - 1}}$) and the system time of the $n$-th update (i.e., ${S_n}={f_n} - {a_n}$). Then, the expected PAoI can be expressed as}
\begin{equation}
    \label{eq:e_paoi}
    \mathbb{E}\left[ P_n \right] =\mathbb{E}\left[ T_n \right] + \mathbb{E}\left[ S_n \right].
\end{equation}

It is important to study the PAoI performance since it represents the locally largest values of the AoI at the monitor and captures more stringent timeliness requirements.
Most of the previous work takes a probabilistic approach to analyze the PAoI, which assumes specific types of distribution of the service and interarrival times.
Taking a different path, we adopt a worst-case approximation approach based on the robust queueing theory, which only requires the knowledge of the first and second order statistical information of the distributions.

\begin{figure}[!t]
\centering
\includegraphics[width=0.38\textwidth]{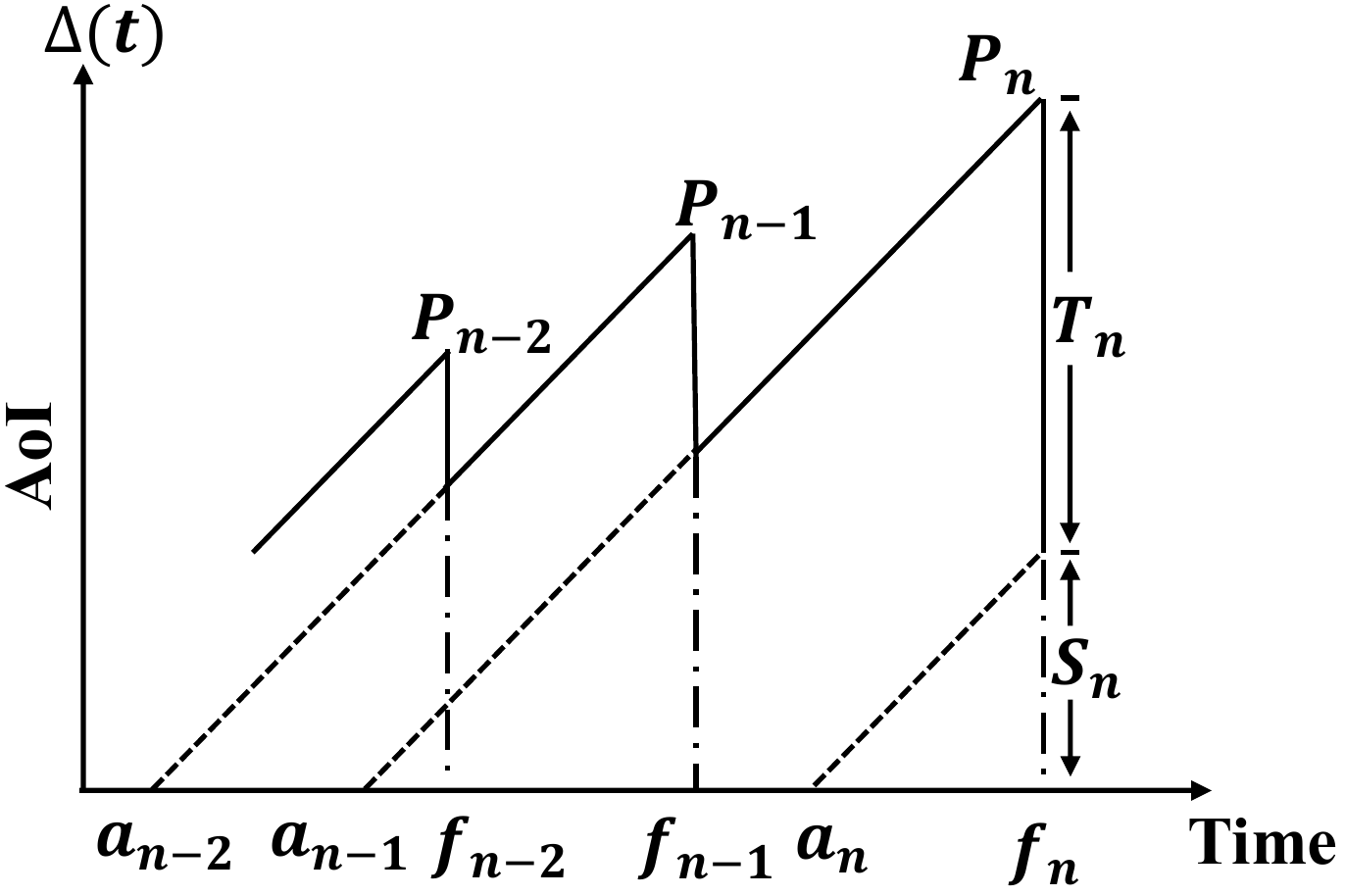}
\caption{An illustration of the AoI evolution at the monitor in the single-source system.}
\label{fig:sawtooth}
\end{figure}

\subsection{The Worst-Case Approach} 
\label{subsec:worst-case_single}

\textcolor{black}{
In this subsection, we will analyze the expected PAoI (see Eq.~\eqref{eq:e_paoi}). Since we trivially have $\mathbb{E}\left[ T_n \right] = 1/\lambda$, it remains to analyze the expected system time $\mathbb{E}\left[ S_n \right]$. 
}

Similar to \cite{robust}, we consider a sample path of $n$ updates. The generation and service process of the updates can be characterized by the interarrival times $\mathbf{T}_n \triangleq (T_1, T_2, \dots, T_n)$ and the service times  $\mathbf{X}_n \triangleq (X_1, X_2, \dots, X_n)$, respectively. Applying the robust queueing theory proposed in \cite{robust}, we adopt a worst-case analysis to obtain an upper bound on the system time $S_n$, which will be leveraged to approximate the expected system $\mathbb{E}\left[ S_n \right]$.
\textit{Different from \cite{robust} that neglects the accuracy of ${S_n}$ under light load, we characterize ${S_n}$ under both light and high loads in the single-source system and the two-source system} (see more discussions in Section~\ref{sec:single_result} and Section~\ref{sec:multiple}, respectively). 
To begin with, we express the system time of the $n$-th update $S_n$ as follows \cite{systemtime}: 
\begin{align} \label{eq:singlepeak}
S_n 
& = W_n + X_n \notag \\
& = \max_{1\leq k\leq n} \left( \sum_{i = k}^n X_i - \sum_{i = k+1}^n T_i \right),
\end{align}
where we recall that $W_n$ ($X_n$, resp.) is the waiting time (the service time, resp.) of the $n$-th update.
Note that $S_n$ involves interarrival times $\mathbf{T}_n$ and service times $\mathbf{X}_n$.


From Eq.~\eqref{eq:singlepeak}, we can see that the analysis of the system time involves the sum of system time and interarrival time of previous updates, which often makes the probabilistic approach intractable.
On the other hand, the Generalized Central Limit Theorem (GCLT) indicates that the distribution of the sum of multiple \textit{i.i.d.} random variables converges to a stable distribution. Motivated by GCLT, we assume that the partial sum of service times satisfies the following:
\begin{equation}\label{eq:singlexineq}
\frac{\sum_{i = k}^n X_i - (n-k+1)/\mu}{(n-k+1)^{1/\alpha}} \leq \Gamma_s,
\end{equation}
where ${\rm{1/}}\mu $ is the expected service time, $\alpha\in(1,2]$ is the tail coefficient that models possibly heavy-tailed probability distributions\footnote{The heavy-tailed distributions (like the Pareto or the Weibull) have heavier tails than the exponential distribution. Roughly speaking, there is a larger probability of getting very large values.} (the closer to $1$, the heavier the tail), and $\Gamma_s>0$ is the variability parameter chosen to ensure that Eq.~\eqref{eq:singlexineq} is satisfied with high probability. 
\textcolor{black}{For instance, the normalized sum of a large number of positive Pareto random variables can be approximated by a random variable $Y$ following a standard stable distribution with a tail coefficient $\alpha $ and ${C_\alpha } = {[G (1 - \alpha )\cos (\pi \alpha /2)]^{1/\alpha }}$, where $G ( \cdot )$ represents the gamma function. For a tail coefficient of $\alpha  = 1.5$, we obtain $\mathbb{P}\left( {Y \le 6.5} \right) \approx 0.975$ and $\mathbb{P}\left( {Y \le 19} \right) \approx 0.995$ \cite{robust}. Thus, choosing ${\Gamma _s} = 6.5$  is one good option in this instance}. 
\textcolor{black}{Note that there is a tradeoff between robustness and accuracy when choosing ${\Gamma _s}$. That is, when choosing a larger enough ${\Gamma _s}$, we can ensure that Eq.~\eqref{eq:singlexineq} is satisfied with a higher probability (i.e., stronger robustness), but this may result in a loose bound on the system time (i.e., lower accuracy). On the other hand, when choosing a smaller ${\Gamma _s}$, we can obtain tight bounds on the system time, but Eq.~\eqref{eq:singlexineq} may be violated with a higher probability. In Section~\ref{subsec:vp}, we discuss one way of choosing ${\Gamma _s}$ that attempts to balance the tradeoff.}
With a properly chosen ${\Gamma _s}$, we further assume that the service times $\mathbf{X}_n$ belong to the following parameterized uncertainty set \cite{robust}:
\begin{align} \label{eq:service-Uset}
\mathcal{U}_s \triangleq \Bigg\{ \mathbf{X}_n ~\vert~ & \frac{\sum_{i = k}^n X_i -  (n-k+1)/\mu}{(n-k+1)^{1/\alpha}}  \leq \Gamma_s, \notag \\
& \forall ~1\leq k \leq n \Bigg\}.
\end{align} 
Although the uncertainty set is motivated by the \textit{i.i.d.} assumption, $\mathbf{X}_n \in \mathcal{U}_s$ does not necessarily require that ${X_1},{X_2}, \ldots ,{X_n}$ be \textit{i.i.d.}

Similarly, we assume that the interarrival times $\mathbf{T}_n$ belong to the following uncertainty set:
\begin{align} \label{eq:interarrival-Uset}
\mathcal{U}_a \triangleq  \Bigg\{ \mathbf{T}_n ~\vert~ &  \frac{\sum_{i = k+1}^n T_i - (n-k)/\lambda}{(n-k)^{1/\alpha} } \geq -\Gamma_a, \notag \\
& \forall ~ 0\leq k\leq n-1  \Bigg\},
\end{align}
where ${\rm{1/}}\lambda $ is the mean interarrival time, $\alpha\in(1,2]$ is the tail coefficient that models possibly heavy-tailed probability distributions, and $\Gamma_a>0$ is the variability parameter chosen to ensure that Eq.~\eqref{eq:interarrival-Uset}  is satisfied  with  high probability. We consider the lower bound on the partial sums of the interarrival times as that leads to the worst-case system time.
We assume that the distributions of the service time and interarrival time have the same tail coefficient $\alpha$. \textcolor{black}{However, we assume no dependence between $\mathbf{T}_n$ and $\mathbf{X}_n$ and will discuss the case with dependence towards the end of this paper.}

\subsection{Performance Analysis}\label{sec:single_result}

Let $\hat{S}_n$ denote the worst-case system time of the $n$-th update. From Eq.~\eqref{eq:singlepeak}, we have the following:
\begin{align}\label{eq:singleworsttx}
\hat{S}_n \triangleq & \max_{ \mathbf{T}_n \in \mathcal{U}_a} \max_{\mathbf{X}_n\in \mathcal{U}_s} \max_{1\leq k \leq n} \left(\sum_{i = k}^n X_i - \sum_{i = k+1}^n T_i \right) \notag \\ \le & \textcolor{black}{\mathop {\max }\limits_{1 \le k \le n} \left( {\mathop {\max }\limits_{{{\bf{X}}_n} \in {{\cal U}_s}} \sum\limits_{i = k}^n {{X_i}}  - \mathop {\min }\limits_{{{\bf{T}}_n} \in {{\cal U}_a}} \sum\limits_{i = k + 1}^n {{T_i}} } \right).}
\end{align}
It is shown in \cite{robust} that we can find the sample path of $\hat{\bf{X}}_n \in \mathcal{U}_s$ and $\hat{\bf{T}}_n \in \mathcal{U}_a$ that achieve the worst-case system time (\textcolor{black}{i.e., $\sum\nolimits_{i = k}^n {{{\hat X}_i}}  = \mathop {\max }\limits_{{{\bf{X}}_n} \in {{\cal U}_s}} \sum\nolimits_{i = k}^n {{X_i}} $ and $\sum\nolimits_{i = k+1}^n {{{\hat T}_i}}  = \mathop {\min }\limits_{{{\bf{T}}_n} \in {{\cal U}_a}} \sum\nolimits_{i = k+1}^n {{T_i}} $}).
Plugging such $\hat{\bf{T}}_n$ and $\hat{\bf{X}}_n$ into Eq.~\eqref{eq:singleworsttx} gives the exact expression of $\hat{S}_n$:
\begin{align} 
{\hat{S}_n} = \max_{1\leq k \leq n} \Bigg\{ \frac{n-k+1}{\mu} & - \frac{n-k}{\lambda} + \Gamma_s(n-k+1)^{1/\alpha} \notag \\ + & \Gamma_a(n-k)^{1/\alpha} \Bigg\}.
\label{eq:exacthsn}
\end{align}
As we can see from Eq.~\eqref{eq:exacthsn}, the worst-case system time $\hat{S}_n$ is proportional to the mean service time $1/\mu$ and inversely proportional to the mean interarrival time $1/\lambda$. Also, larger $\Gamma_s$ and $\Gamma_a$ lead to a larger $\hat{S}_n$, given that $\Gamma_s$ and $\Gamma_a$ are used to bound the sum of service times $\mathbf{X}_n$ and the sum of interarrvial times $\mathbf{T}_n$ in the uncertainty sets, respectively. 

Based on Eq.~\eqref{eq:exacthsn} and some relaxations, one can derive an upper bound on the worst-case system time ${\hat S_n}$.
We restate this result in the following lemma.

\begin{lemma}[Theorem 2 in \cite{robust}]
In a single-source FCFS queueing system with $\mathbf{T}_n\in\mathcal{U}_a$ and $\mathbf{X}_n\in\mathcal{U}_s$, we have 
\begin{equation}\label{eq:singlebound}
{\hat S_n} \leq \frac{\alpha-1}{\alpha^{\alpha/(\alpha - 1)}}  \cdot \frac{(\Gamma_s + \Gamma_a)^{\alpha/(\alpha - 1)}}{(1/\lambda - 1/\mu)^{1/(\alpha-1)}} + 1/\lambda.
\end{equation}
\end{lemma}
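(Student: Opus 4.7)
The plan is to start from the exact expression in Eq.~\eqref{eq:exacthsn} and reduce the discrete maximization to a tractable one-variable continuous optimization whose extremum can be computed in closed form. First, I would substitute $j \triangleq n-k$, so that $j$ ranges over $\{0,1,\dots,n-1\}$, and rewrite the $1/\mu$ contribution as
\begin{equation*}
\hat{S}_n = \frac{1}{\mu} + \max_{0\le j\le n-1}\Bigl\{\, -\bigl(\tfrac{1}{\lambda}-\tfrac{1}{\mu}\bigr)j + \Gamma_s(j+1)^{1/\alpha} + \Gamma_a j^{1/\alpha} \,\Bigr\}.
\end{equation*}
Since stability presumes $1/\lambda - 1/\mu > 0$, the linear term is a drift that eventually dominates and guarantees the supremum is attained at finite $j$. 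To remove the mismatch between $(j+1)^{1/\alpha}$ and $j^{1/\alpha}$, I would apply the monotonicity bound $\Gamma_s(j+1)^{1/\alpha} + \Gamma_a j^{1/\alpha} \le (\Gamma_s+\Gamma_a)(j+1)^{1/\alpha}$, collapsing the two variability parameters into a single coefficient.

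Next, I would make the change of variable $u \triangleq j+1$ and relax the constraint $u\ge 1$ (and $u$ integer) to $u\ge 0$ continuously; the drift term $-(1/\lambda - 1/\mu)(u-1)$ then contributes $1/\mu + (1/\lambda - 1/\mu) = 1/\lambda$ as an additive constant. This gives
\begin{equation*}
\hat{S}_n \le \frac{1}{\lambda} + \sup_{u\ge 0}\Bigl\{\, (\Gamma_s+\Gamma_a)\,u^{1/\alpha} - \bigl(\tfrac{1}{\lambda}-\tfrac{1}{\mu}\bigr)u \,\Bigr\}.
\end{equation*}
This inner supremum is a standard concave maximization of the form $\sup_{u\ge 0}\{Au^{1/\alpha} - Bu\}$, solvable by setting the derivative to zero: the maximizer is $u^* = (A/(\alpha B))^{\alpha/(\alpha-1)}$, and substituting back produces the value $\frac{\alpha-1}{\alpha^{\alpha/(\alpha-1)}}\cdot \frac{A^{\alpha/(\alpha-1)}}{B^{1/(\alpha-1)}}$ after elementary exponent arithmetic (using $1 + 1/(\alpha-1) = \alpha/(\alpha-1)$). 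Identifying $A = \Gamma_s+\Gamma_a$ and $B = 1/\lambda - 1/\mu$ yields exactly the right-hand side of Eq.~\eqref{eq:singlebound}.

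The two steps that warrant a bit of care are the relaxation step and the exponent bookkeeping. The relaxation $(j+1)^{1/\alpha}\Gamma_s + j^{1/\alpha}\Gamma_a \le (\Gamma_s+\Gamma_a)(j+1)^{1/\alpha}$ is what allows the two boundary terms to be merged; this is the main ``obstacle'' in that it is where a potentially looser-than-necessary bound is introduced, and one must check that no stronger combination is needed for the target expression. Everything else is one-dimensional calculus: the concavity of $u\mapsto u^{1/\alpha}$ for $\alpha>1$ guarantees a unique finite maximizer, and enlarging the feasible set from $\{u\ge 1\}$ to $\{u\ge 0\}$ (and from integer $j$ to real $u$) can only increase the supremum, preserving the inequality direction. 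After these reductions the claimed closed form drops out directly from first-order optimality.
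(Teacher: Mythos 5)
Your derivation is correct and is essentially the argument the paper intends for this lemma: the paper does not reprove it (it restates Theorem 2 of \cite{robust}, obtained from Eq.~\eqref{eq:exacthsn} ``and some relaxations''), and your steps --- substituting $j=n-k$, merging $\Gamma_s(j+1)^{1/\alpha}+\Gamma_a j^{1/\alpha}\le(\Gamma_s+\Gamma_a)(j+1)^{1/\alpha}$, relaxing to a continuous concave maximization $\sup_{u\ge 0}\{Au^{1/\alpha}-Bu\}$, and solving the first-order condition --- are exactly that relaxation, with the exponent bookkeeping yielding the stated constant $\tfrac{\alpha-1}{\alpha^{\alpha/(\alpha-1)}}$ and the additive $1/\lambda$. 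The relaxation you flag as the main source of slack is indeed what makes \eqref{eq:singlebound} loose under light load, which is precisely what the paper's Theorem~\ref{theo:singlebound} later avoids by working with the integer maximizer of Eq.~\eqref{eq:exacthsn} directly.
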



It has been shown that the above upper bound is nearly tight when the traffic load is high \cite{robust}.
However, the relaxations used in \cite{robust} renders the bound loose when the traffic load is light, i.e., when $\lambda$ is relatively small.
This is acceptable in the analysis of queueing system with respect to traditional metrics, such as delay or throughput, which are typically pronounced in the high-load regime.
\textit{
While for the metric of PAoI, a low arrival rate means long interarrival times between consecutive updates, which leads to a poor PAoI performance as well. This indicates that both light and high loads affect the PAoI performance significantly, and thus, focusing on the system time in the high-load regime only is insufficient.  
}
Instead of introducing relaxations, we propose an alternative method that provides the exact characterization of the worst-case system time.
We derive a new upper bound of the worst-case system time, which is nearly tight under both light and high loads. The upper bound is presented in the following theorem.

\begin{theorem}\label{theo:singlebound}
Define the following function 
$f(m) \triangleq (m + 1)/\mu  - m/\lambda  + {\Gamma _s}{(m + 1)^{1/\alpha }} + {\Gamma _a}{m^{1/\alpha }}$
and let 
\begin{equation}
    l = \Bigg(\frac{\alpha(1/\lambda-1/\mu)}{\Gamma_a+\Gamma_s} \Bigg)^{\alpha/(1-\alpha)},
\end{equation}
where $\alpha  \in (1,2]$.
In a single-server FCFS queueing system with $\mathbf{T}_n\in\mathcal{U}_a$ and $\mathbf{X}_n\in\mathcal{U}_s$, we have
\begin{equation}\label{eq:singlebound2}
{{\hat S}_n} \leq \left\{ {\begin{array}{*{20}{c}}
{\max \{ f(n - 1),0\} ,\;\;\;{\rm{if}}\;n - 1 \le \left\lfloor l \right\rfloor  - 1}\\
{\max \{ f({m^*}),0\} ,\;\;\;\;\;{\rm{otherwise}}\;\;\;\;\;\;\;\;\;}
\end{array}} \right. ,
\end{equation}
where  ${m^*} \in \mathop {\arg \max }\limits_{m \in \{ \lfloor l \rfloor  - 1, \lfloor l \rfloor, \lfloor l \rfloor  + 1 \} \cap [0, n - 1]} f\left( m \right)$.
\end{theorem}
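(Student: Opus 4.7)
The plan is to exploit the exact expression~\eqref{eq:exacthsn} for $\hat S_n$ and reduce the discrete maximization to a constant-size search. Writing $m = n - k$ recasts \eqref{eq:exacthsn} as
\[
  \hat S_n = \max_{0 \le m \le n-1} f(m),
\]
so the task is to locate the discrete maximizer of $f$ on $\{0, 1, \dots, n-1\}$. The strategy has three steps: (i) show $f$ is strictly concave on $(0,\infty)$ with a unique continuous maximizer $m^{c}$; (ii) sandwich $m^{c}$ between $l-1$ and $l$; (iii) split on whether the endpoint $n-1$ truncates before or after the peak.

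For step (i), a direct second-derivative computation gives
\[
  f''(m) = \frac{1}{\alpha}\!\left(\frac{1}{\alpha}-1\right)\!\bigl[\Gamma_s(m+1)^{1/\alpha-2} + \Gamma_a\,m^{1/\alpha-2}\bigr],
\]
which is strictly negative on $(0,\infty)$ since $1/\alpha - 1 < 0$ for $\alpha \in (1,2]$. Because $f'(m) \to +\infty$ as $m \to 0^+$ (the $m^{1/\alpha-1}$ term blows up) and $f'(m) \to 1/\mu - 1/\lambda < 0$ as $m \to \infty$ by stability, there is a unique interior maximizer $m^{c}$ satisfying
\[
  \frac{\Gamma_s}{\alpha}(m^{c}+1)^{1/\alpha-1} + \frac{\Gamma_a}{\alpha}(m^{c})^{1/\alpha-1} = \frac{1}{\lambda} - \frac{1}{\mu}.
\]
For step (ii), I use the strict monotonicity of $g(x) = x^{1/\alpha-1}$: substituting the strictly larger $(m^{c})^{1/\alpha-1}$ for $(m^{c}+1)^{1/\alpha-1}$ on the left yields $(\Gamma_s + \Gamma_a)(m^{c})^{1/\alpha-1}/\alpha > 1/\lambda - 1/\mu$, which rearranges to $m^{c} < l$; the opposite substitution gives $m^{c} + 1 > l$. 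Hence $l - 1 < m^{c} < l$, and by concavity the unconstrained discrete maximizer over $\mathbb{Z}_{\ge 0}$ lies in $\{\lfloor m^{c}\rfloor, \lceil m^{c}\rceil\} \subseteq \{\lfloor l\rfloor - 1, \lfloor l\rfloor, \lfloor l\rfloor + 1\}$.

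Step (iii) is a short case analysis. If $n - 1 \le \lfloor l\rfloor - 1$, then $n - 1 \le \lfloor l\rfloor - 1 \le l - 1 < m^{c}$, so $f$ is strictly increasing on $[0, n-1]$ and the truncated maximum equals $f(n-1)$; otherwise $n - 1 \ge \lfloor l\rfloor$, and $\{\lfloor l\rfloor - 1, \lfloor l\rfloor, \lfloor l\rfloor + 1\} \cap [0, n-1]$ still contains the restricted discrete maximizer, because any candidate dropped by intersecting with $[0,n-1]$ is dominated via concavity by a neighbor that remains (in particular, $\lfloor l\rfloor$ is always included and already dominates every index $\le \lfloor l\rfloor - 2$, while $\lfloor l\rfloor + 1$, when excluded by the truncation $n-1 = \lfloor l\rfloor$, is irrelevant because $f(\lfloor l\rfloor)$ bounds the truncated max on the left of the peak). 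The outer $\max\{\cdot, 0\}$ is a trivial non-negativity floor and is in fact redundant since $f(0) = 1/\mu + \Gamma_s > 0$. The main obstacle is step (ii): the entire payoff of the theorem—a search of size three instead of $O(n)$—rests on extracting the two-sided bound $l - 1 < m^{c} < l$ from the implicit first-order condition, which works only because $g(x) = x^{1/\alpha-1}$ is monotonic in a quantitatively controlled way.
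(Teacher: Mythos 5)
Your proposal is correct and follows essentially the same route as the paper's proof: substitute $m=n-k$, establish concavity of $f$ via the second derivative, bracket the continuous maximizer in $(l-1,l)$, and split cases on whether the domain $[0,n-1]$ truncates before the peak. The only cosmetic difference is that you extract the sandwich $l-1<m^{c}<l$ by substituting into the first-order condition and using the monotonicity of $x^{(1-\alpha)/\alpha}$, whereas the paper introduces the auxiliary function $g(m)$ with $g(l)=0$ and checks the signs of $f'(l)$ and $f'(l-1)$ --- the same underlying argument.
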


\begin{proof}
The derivation of the upper bound of the worst-case system time ${\hat S}_n$ in Eq.~\eqref{eq:singlebound2} utilizes the convacity of Eq.~\eqref{eq:exacthsn}.
Let $m = n-k $.
According to Eq.~\eqref{eq:exacthsn}, we can rewrite the worst-case system time as
\begin{align}
   {\hat S_n}  & = \mathop {\max }\limits_{_{0 \le m \le n - 1}} \frac{{m + 1}}{\mu } - \frac{m}{\lambda } + {\Gamma _s}{(m + 1)^{1/\alpha}} + {\Gamma _a}{m^{1/\alpha }} \notag 
   \\&= \mathop {\max }\limits_{_{0 \le m \le n - 1}} f(m).
\end{align}    
Let $m^*$ be the integral maximizer of $f(m)$, i.e., ${\hat S_n} = f(m^*)$.
In order to find the bound of ${m^{\rm{*}}}$, we extend the domain of $f(m)$ to the set of nonnegative real numbers (i.e., $m \in R^{+}$). 
The second order derivative of $f(m)$ is 
\begin{equation}
f^{''}(m) = \frac{1-\alpha}{\alpha^2}\Big(  \Gamma_s(m+1)^{(1-2\alpha)/\alpha} + \Gamma_a m^{(1-2\alpha)/\alpha}  \Big),
\end{equation}
which is negative since $\alpha  \in \left( {1,2} \right]$. This implies that $f(m)$ is concave.
Let $M\in R^+$ be the continuous maximizer of $f(m)$, i.e., $M$ is the solution of the following equation:
\begin{equation} \label{eq:single-first-deri}
    f'(m) = \frac{1}{\alpha } \left({\Gamma _s}{(m + 1)^{\frac{{1 - \alpha }}{\alpha }}} + {\Gamma _a}{m^{\frac{{1 - \alpha }}{\alpha }}} \right) - \frac{1}{\lambda } + \frac{1}{\mu } = 0.
\end{equation}
However, it is usually difficult to solve Eq.~\eqref{eq:single-first-deri} to get the expression of $M$.
We then define function $g\left( m \right)$ as
\begin{equation}
    g\left( m \right) = \frac{1}{\alpha }({\Gamma _s}{m^{\frac{{1 - \alpha }}{\alpha }}} + {\Gamma _a}{m^{\frac{{1 - \alpha }}{\alpha }}}) - \frac{1}{\lambda } + \frac{1}{\mu },
\end{equation}
and let  $l$ be the solution of $g\left( m \right) = 0$, which gives
\begin{equation} \label{eq:g(l)}
    g\left( l \right) = \frac{1}{\alpha }({\Gamma _s}{l^{\frac{{1 - \alpha }}{\alpha }}} + {\Gamma _a}{l^{\frac{{1 - \alpha }}{\alpha }}}) - \frac{1}{\lambda } + \frac{1}{\mu } = 0,
\end{equation}
and 
\begin{equation} \label{eq:l}
    l = {\left( {\frac{{\alpha (1/\lambda  - 1/\mu )}}{{{\Gamma _a} + {\Gamma _s}}}} \right)^{\frac{\alpha }{{1 - \alpha }}}}.
\end{equation}
Note that the following is satisfied: 
\begin{align}
    f'\left( l \right) &= \frac{1}{\alpha }\left( {{\Gamma _s}{{\left( {l + 1} \right)}^{\frac{{1 - \alpha }}{\alpha }}} + {\Gamma _a}{{l}^{\frac{{1 - \alpha }}{\alpha }}}} \right) - \frac{1}{\lambda } + \frac{1}{\mu } \notag \\& = \frac{{{\Gamma _s}}}{\alpha }\left( {{{\left( {l + 1} \right)}^{\frac{{1 - \alpha }}{\alpha }}} - {{l}^{\frac{{1 - \alpha }}{\alpha }}}} \right)  \notag \\& < 0,
\end{align}  
where the second equality follows from Eq.~\eqref{eq:g(l)} and the last inequality holds because $x^{\frac{1 - \alpha}{\alpha }}$ is a decreasing function for $\alpha  \in \left( {1,2} \right]$. Similarly, we can show $f'\left( {l - 1} \right) > 0$.  Therefore,  we have $f'\left( {l - 1} \right) > f'\left( {M} \right) > f'\left( l \right)$, which implies $l-1 < M < l$ since $f'\left( m \right)$ is a decreasing function due to the concavity of $f(m)$.
Some thoughts give that the integral maximizer $m^*$ must satisfy the following: $m^*\in\{ \lfloor l\rfloor - 1, \lfloor l\rfloor, \lfloor l\rfloor +1\}$.
Recall that function $f\left( m \right)$ is defined on  $[0,n - 1]$.
If ${n - 1 \le \left\lfloor l \right\rfloor  - 1}$, i.e., the integral maximizer ${{m^*}}$ is out of the domain,  then we have ${{\hat S}_n} \le f\left( {n - 1} \right)$  since  $f\left( m \right)$ is an increasing function on $[0,n - 1]$;
otherwise, if $n - 1 > \left\lfloor l \right\rfloor  - 1$, then ${{m^*}}$ must be a value belonging to $\left\{ {\left\lfloor l \right\rfloor  - 1,\left\lfloor l \right\rfloor ,\left\lfloor l \right\rfloor  + 1} \right\}$ that satisfies ${{m^*} \le n - 1}$ and achieves the maximum of $f(m)$. 
Finally, we take the maximum between $f(m^*)$ and $0$ because the system time is nonnegative. This completes the proof.
\end{proof}    

\textcolor{black}{Using the upper bound of the worst-case system time derived in Theorem~\ref{theo:singlebound}, we can approximate the expected system time by choosing appropriate variability parameters $\Gamma_a$ and $\Gamma_s$ (see Section~\ref{subsec:vp} for detailed discussions).
This enables us to approximate the expected PAoI given that the expected interarrival time is already known (i.e., $1/\lambda$). 
}



\begin{figure}[!t]
\centering
\includegraphics[width=0.35\textwidth]{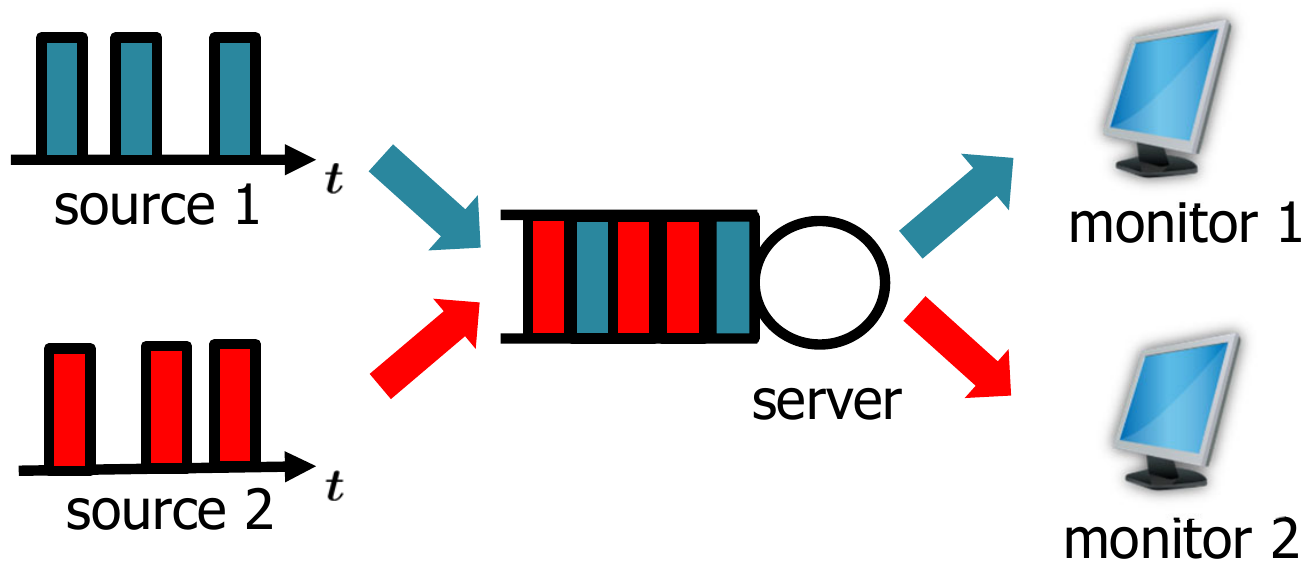}
\caption{A two-source information-update system.}
\label{fig:multiple_model}
\end{figure}

\section{Two-Source System}\label{sec:multiple}
In this section, we consider a more general system that consists of two sources, one shared server, and two monitors. 
Each monitor receives the update only from one source.
An illustration of the system is shown in Fig.~\ref{fig:multiple_model}.
Our interest is to analyze the expected PAoI performance at each monitor.
\textcolor{black}{
Note that this generalization is non-trivial and involves new technical challenges (e.g., the service times of the updates from two sources are coupled in one single uncertainty set). }
 
\subsection{System Model}
There are a total number of $n$ updates from both sources all going through one single server to the corresponding monitor.
We assume that these two sources are symmetric, i.e., their update interarrival time follows the same general distribution with mean $1/\lambda$ and their update's service time also follows the same general distribution with mean  $1/\mu$.
We assume  $2\lambda /\mu< ~1$ in order to keep the system stable.
We use $a^{(s)}_n$ ($f^{(s)}_n$, resp.) to denote the arrival time (service completion time, resp.) of the $n$-th update from source $s = 1, 2$.
The service time of the $n$-th update from source $s$ is denoted by $X^{(s)}_n  \triangleq f^{(s)}_n - a^{(s)}_{n}$. 
The interarrival time between the $n$-th update and the $(n - 1)$-st update from source $s$ is denoted by $T^{(s)}_n  \triangleq a^{(s)}_n - a^{(s)}_{n - 1}$. 
Then, the AoI at monitor $s$ is defined as 
\begin{equation}
\Delta^{(s)}(t) \triangleq t - a^{(s)}(t),
\end{equation}
where $a^{(s)}(t)$ denotes the generation time of the most recently received update at monitor $s$.
\textcolor{black}{Similarly, the $n$-th PAoI at monitor $s$ can be expressed as  $P_n^{(s)} \triangleq f_n^{(s)} - a_{n - 1}^{(s)} = T_n^{(s)} + S_n^{(s)}$, and the expected PAoI can be denoted as 
\begin{equation}
    \mathbb{E}[P_n^{(s)}] = \mathbb{E}[T_n^{(s)}] + \mathbb{E}[S_n^{(s)}] = 1/\lambda  + \mathbb{E}[S_n^{(s)}].
\end{equation}
Similar to the single-source case, we utilize the robust queueing theory to analyze the worst-case performance of $S_n$ (the system time of the $n$-th update, either from source 1 or 2) and use it to approximate $\mathbb{E}[S_n^{(s)}]$.}

Since the updates from both sources will join the same FCFS queue, we reorder all updates according to their arrival time. 
Let $a_n$ denote the arrival time of the $n$-th update that arrives at the server\footnote{
We use the same notations as those in the single-source case. 
If it has a superscript, the superscript indicates which source this notation is related
to. If not, this notation corresponds to the server side.}.
Note that this update could be from either source 1 or 2.
Consider a sample path of $n$ updates that arrive at the server at time $\mathbf{A}_n = \{a_1, a_2, \dots, a_n\}$ such that $a_1 \leq a_2 \leq \dots \leq a_n$.
The service time of the $n$-th update is denoted by $X_n$, and  corresponding service times for the whole sample path are $\mathbf{X_n} = \{X_1, X_2, \dots, X_n\}$.
The interarrival time between the $n$-th and $(n-1)$-st updates is denoted by $T_n \triangleq a_n - a_{n-1}$, and the corresponding interarrival times for the whole sample path are $\mathbf{T}_n = \{T_1, T_2, \dots, T_n\}$.
To distinguish the updates from source 1 and source 2, we further define the following functions that map the update to its source according to the arrival times:
\begin{align}\notag
b_s(n) \triangleq  \arg\min_k a_k^{(s)}\geq a_n,~ 
e_s(n) \triangleq  \arg\max_k &  a_k^{(s)} \leq a_n, \\& s = 1, 2. \notag
\end{align}
According to the mapping function, $e_s(n)$ is the index of the last update from source $s$ and $a_{e_s(n)}^{(s)}$ is the arrival time of this last update.
Therefore, 
the interarrival times of two sources can be denoted as ${\mathbf{T}_{{e_1}(n)}^{(1)}} = \{ T_1^{(1)}, T_2^{(1)}, \dots, T_{e_1(n)}^{(1)} \}$ and ${\mathbf{T}_{{e_2}(n)}^{(2)}} = \{ T_1^{(2)}, T_2^{(2)}, \dots, T_{e_2(n)}^{(2)} \}$, respectively.


Now, we consider the system time corresponding to the last update that arrives at the server  (i.e., the $n$-th update). 
Assume that this update is from source 1. Recall that the system time $S_n$ (see Eq.~\eqref{eq:singlepeak}) can be written as
\begin{align}
S_n = \max_{1\leq k \leq n} (\sum_{i = k}^n X_i - \sum_{i = k+1}^n T_i).\label{eq:mulsystem}
\end{align}
In the rest of the paper, the analysis is based on the assumption that  the $n$-th update is from source 1.
It is easy to apply the same analysis to the case that the $n$-th update is from source~2.

For the two-source system, the AoI is determined by service times $\mathbf{X}_n$ and the interarrival times of two sources, $\mathbf{T}_{{e_1}(n)}^{(1)}$ and $\mathbf{T}_{{e_2}(n)}^{(2)}$.
We need three uncertainty sets for them.
First, for the service times, we assume that two sources have the same type of updates.
Therefore, the same assumption as in the single-source system still holds since we also have only one server, i.e., 
\begin{align}
\mathcal{U}_s = \Bigg\{ (X_1, X_2, \dots, X_n) \vert & \frac{\sum_{i = k}^n X_i - \frac{n-k+1}{\mu}}{(n-k+1)^{1/\alpha}} \leq \Gamma_s, \notag \\
& \forall ~1\leq k \leq n \Bigg\}.
\end{align} 
For the interarrival time $\mathbf{T}_{{e_1}(n)}^{(1)}$ and $\mathbf{T}_{{e_2}(n)}^{(2)}$, we assume that the uncertainty set for source $s$ is
\begin{align}\label{eq:set_multi_interarrival}
\mathcal{U}_a^s =  \Bigg\{  (T^{(s)}_1, T^{(s)}_2, &  \dots,  T_{e_s(n)}^{(s)} ) \vert \frac{\sum_{i = k}^l T^{(s)}_i - \frac{l-k+1}{\lambda}}{(l-k+1)^{1/\alpha} } \geq -\Gamma_a,\notag\\
& \forall ~1\leq l \leq e_s(n) ~\text{and}~ 1\leq k\leq l  \Bigg\}.
\end{align}
Here the assumptions for $\alpha $, ${\Gamma _s}$, and ${\Gamma _a}$ are the same as that in the single-source case. With the uncertainty sets for each source, we further define the uncertainty set for the whole sample path, $\mathcal{U}_a=\mathcal{U}_a^1 \cup \mathcal{U}_a^2$.

\subsection{Performance Analysis}\label{sec:multi_peak}
In this subsection, we derive an upper bound on the worst-case system time in the two-source system.

\begin{theorem}\label{theo:multibound}
    Define the following function\footnote{By slightly abusing the notation, we also use $f(m)$ here.}
$f(m) \triangleq 2\left( {m + 1} \right)/\mu  - m/\lambda  + 2{\Gamma _s}{(m + 1)^{1/\alpha }} + {\Gamma _a}{m^{1/\alpha }}$, and let 
\begin{equation}
    l = \Big(\frac{\alpha(1/\lambda-2/\mu)}{\Gamma_a+2\Gamma_s} \Big)^{\frac{\alpha}{1-\alpha}},
\end{equation}
where $\alpha  \in (1,2]$.
In a single-server FCFS queue with two symmetric sources, such that $\mathbf{T}_{{e_s}(n)}^{(s)}\in\mathcal{U}_a^{s} \left( {s = 1,2} \right)$ and $\mathbf{X}_n\in\mathcal{U}_s$, we have 
\begin{equation}\label{eq:multibound}
{\hat S_n} \leq \left\{ {\begin{array}{*{20}{c}}
{\max \left\{ {f\left( { - \frac{1}{2}} \right),f\left( {\frac{n}{2} - 1} \right),0} \right\},\;\;\;{\rm{if}}\;\frac{n}{2} - 1 \le \left\lfloor l \right\rfloor  - 1}\\
{\max \left\{ {f\left( { - \frac{1}{2}} \right),f\left( {{m^*}} \right),0} \right\},\;\;\;\;\;\;\;\;\;\;\;\;{\rm{otherwise}}}
\end{array}} \right.,
\end{equation}
where ${m^*} \in \mathop {\arg \max }\limits_{m \in \mathcal{D} \cap \left\{ {m \le n/2 - 1} \right\}} f\left( m \right)$  and $\mathcal{D} \triangleq \{ \left\lfloor l \right\rfloor - 1, \left\lfloor l \right\rfloor - 1/2, \left\lfloor l \right\rfloor, \left\lfloor l \right\rfloor + 1/2, \left\lfloor l \right\rfloor  + 1\}$.
\end{theorem}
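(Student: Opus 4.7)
The plan is to follow the same high-level strategy as Theorem~\ref{theo:singlebound}: reduce $\hat{S}_n$ to maximizing a concave scalar function $f(m)$ over a discrete set, locate its continuous maximizer via the first-order condition, and then round to the best grid point. The new difficulty is the two-source coupling---$\mathcal{U}_s$ mixes the service times of both sources while $\mathcal{U}_a^{1}$ and $\mathcal{U}_a^{2}$ are per-source---so the analog of Eq.~\eqref{eq:exacthsn} has to be rebuilt before the concavity argument can be reused.

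First I would start from Eq.~\eqref{eq:mulsystem} and push the worst-case maximization over $\mathbf{X}_n$, $\mathbf{T}_{e_1(n)}^{(1)}$, and $\mathbf{T}_{e_2(n)}^{(2)}$ inside the outer $\max_{1\leq k\leq n}$, exactly as in Eq.~\eqref{eq:singleworsttx}. For a candidate starting index $k$, I parametrize the bound by $m$, the (possibly half-integer) count describing the interleaving of source-$1$ and source-$2$ updates in the window $[a_k, a_n]$. Applying $\mathcal{U}_s$ to the combined sum yields $\sum_{i=k}^n X_i \leq 2(m+1)/\mu + \Gamma_s (2(m+1))^{1/\alpha}$, which I then relax to $2(m+1)/\mu + 2\Gamma_s(m+1)^{1/\alpha}$ via the subadditivity $(a+b)^{1/\alpha}\leq a^{1/\alpha}+b^{1/\alpha}$ valid for $\alpha \in (1,2]$. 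Applying $\mathcal{U}_a^{1}$ (or $\mathcal{U}_a^{2}$ by symmetry) to the source-$1$ interarrivals separating $a_k$ from $a_n$ gives $\sum_{i=k+1}^n T_i \geq m/\lambda - \Gamma_a m^{1/\alpha}$, via a short case split on whether the $k$-th update is from source~$1$ or source~$2$. Subtracting yields $\hat{S}_n \leq \max_m f(m)$ over the grid $m \in \{-1/2,\,0,\,1/2,\,1,\,\dots,\,n/2-1\}$; the half-integer spacing comes from configurations where the two sources contribute unequal numbers of updates in the window.

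The rest mirrors the proof of Theorem~\ref{theo:singlebound} almost verbatim. I extend $f$ to $m \in R^{+}$, verify concavity by checking that $f''(m)$ again carries the negative factor $(1-\alpha)/\alpha^2$, and locate the continuous maximizer $M$ via the auxiliary function $g(m)$ obtained by replacing $(m+1)^{(1-\alpha)/\alpha}$ with $m^{(1-\alpha)/\alpha}$ in $f'(m)$; monotonicity of $x^{(1-\alpha)/\alpha}$ forces $l-1 < M < l$ for the new value of $l$ given in the statement. Because the grid now has mesh $1/2$ rather than $1$, the discrete optimum $m^{*}$ must lie in the five-element set $\mathcal{D}$ rather than the three-element set of Theorem~\ref{theo:singlebound}. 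The domain-overflow branch $n/2-1\leq\lfloor l\rfloor-1$ corresponds to $M$ being unreachable, so the maximum is taken at the right endpoint $n/2-1$; the additional term $f(-1/2)$ covers the boundary case $k=n$ where only the last update's service contributes to the sum; and the outer $\max\{\cdot,0\}$ enforces nonnegativity of the system time.

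I expect the main obstacle to be the bookkeeping in the first step: correctly identifying the parametrization $m$ so that the worst-case interleaving of the two symmetric sources---with their coupled service-time constraint from $\mathcal{U}_s$ and separate interarrival constraints from $\mathcal{U}_a^{1}$ and $\mathcal{U}_a^{2}$---lines up cleanly with the half-integer grid and produces exactly the function $f(m)$ in the statement. In particular, one must carefully verify the subadditivity relaxation is sharp enough for both parity cases of $n-k+1$ and that the two source-selection cases ($k$ from source~$1$ vs.\ source~$2$) both yield the same lower bound $m/\lambda - \Gamma_a m^{1/\alpha}$. Once this reduction is established, the concavity analysis, maximizer-localization, and rounding steps transfer from the single-source proof with only cosmetic changes (a larger candidate set $\mathcal{D}$ and the extra boundary term $f(-1/2)$).
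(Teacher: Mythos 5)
Your overall scaffolding (reduce $\hat S_n$ to maximizing the concave $f(m)$ over a half-integer grid, localize the continuous maximizer between $l-1$ and $l$, enlarge the candidate set to the five points of $\mathcal{D}$, keep the boundary term $f(-1/2)$ for $k=n$, handle the domain-overflow branch at the endpoint $n/2-1$, and truncate at $0$) is exactly the paper's plan, and those steps transfer from Theorem~\ref{theo:singlebound} as you say. The genuine gap is in the one step that is actually new in the two-source setting: the lower bound on the interarrival partial sum $T(k)=\min_{\mathbf{T}_n\in\mathcal{U}_a}\sum_{i=k+1}^{n}T_i$. You claim that applying $\mathcal{U}_a^{1}$ (or $\mathcal{U}_a^{2}$ ``by symmetry'') to the source-1 interarrivals in the window, plus a case split on the source of the $k$-th update, gives $\sum_{i=k+1}^{n}T_i\ge m/\lambda-\Gamma_a m^{1/\alpha}$ with $m=(n-k-1)/2$. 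That is not justified and is false as stated: a single source's uncertainty set only constrains the sum of \emph{that source's} interarrivals inside the window, and the strength of that bound scales with the number of that source's updates there, which is part of the adversary's choice. The worst case over $\mathcal{U}_a=\mathcal{U}_a^{1}\cup\mathcal{U}_a^{2}$ includes interleavings in which nearly all of the $n-k+1$ updates in the window come from the other source, making your single-source bound essentially vacuous (one or zero interarrivals) even though $m=(n-k-1)/2$ is large. The paper closes exactly this hole: it parametrizes the interleaving by $h$, the number of source-2 updates in the window, notes that $a_n-a_k$ dominates \emph{both} per-source partial sums so that $T(k)\ge\max\{L_1(h),L_2(h)\}$ as in Eq.~\eqref{eq:bound1m}--\eqref{eq:bound2m}, and then minimizes over the adversarial $h$; convexity and symmetry of $L_1,L_2$ put the min-max at the balanced split $h=(n-k+1)/2$, which is what produces the terms $-m/\lambda+\Gamma_a m^{1/\alpha}$ with $m=(n-k-1)/2$ in Eq.~\eqref{eq:tkbound}. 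Without this $\min_h\max$ argument the reduction of Eq.~\eqref{eq:worstbypk} to $\max_m f(m)$ is simply not established; you flag this as ``bookkeeping,'' but it is the substantive content of the theorem.

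A related inconsistency in your parametrization: you describe $m$ as a count ``describing the interleaving,'' yet your service-time bound $\sum_{i=k}^{n}X_i\le 2(m+1)/\mu+\Gamma_s\bigl(2(m+1)\bigr)^{1/\alpha}$ requires $n-k+1=2(m+1)$, i.e., $m$ is tied to the window length and is independent of the interleaving (it is the interarrival bound, after the balancing argument above, that introduces the factor $1/2$). Likewise the half-integer mesh of the grid comes from the parity of $n-k-1$ as $k$ ranges over $1,\dots,n$, not from ``unequal contributions'' of the two sources. Your remaining ingredients---pushing the inner maximization/minimization inside the outer $\max_k$, the relaxation $\bigl(2(m+1)\bigr)^{1/\alpha}\le 2(m+1)^{1/\alpha}$, concavity of $f$, the localization $l-1<M<l$, and the rounding to $\mathcal{D}$---are correct and coincide with the paper.
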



\begin{proof}
As shown in Eq.~\eqref{eq:multibound}, the expression of the system time is very similar to that in the single-source system.
However, the analysis is quite different since the interarrival times consist of two sequences.
Given $\mathbf{T}_{{e_1}(n)}^{(1)}$ and $\mathbf{T}_{{e_2}(n)}^{(2)}$, the worst-case system time is
\begin{align}\label{eq:worstt}
\hat{S}_n ({\mathbf{T}_{{e_1}(n)}^{(1)}}, {\mathbf{T}_{{e_2}(n)}^{(2)}}) 
 \leq
\max_{1\leq k \leq n} 
(\max_{\mathbf{X}_n\in \mathcal{U}_s}  
\sum_{i = k}^n X_i - \sum_{i = k+1}^n T_i).
\end{align}
First, it is shown in \cite{robust} that there exists a sequence of service times $ \mathbf{\hat{X}}_n \in \mathcal{U}_s$ that achieves the upper bound in Eq.~\eqref{eq:worstt}.
We restate the results here that the sequence $\mathbf{\hat{X}}_n$ satisfies: 
\begin{align}\notag
\sum_{i=k}^n \hat{X}_i = \max_{\mathbf{X}_n\in \mathcal{U}_s}  
\sum_{i = k}^n X_i =&  \frac{n-k+1}{\mu} +\Gamma_s (n-k+1)^{1/\alpha},\\
& \forall~ k = 1, 2, \dots, n.\label{eq:x}
\end{align}
This implies that the service times that achieve the worst-case system time are independent of the interarrival times.
Therefore, we can replace the partial sum of service times with Eq.~\eqref{eq:x}:
\begin{align}\label{eq:worst}
{\hat S_n} 
& = \max_{\mathbf{T}_n \in \mathcal{U}_a} \max_{1\leq k \leq n} (\sum_{i = k}^n \hat{X}_i - \sum_{i = k+1}^n T_i) \notag \\
& \leq
\max_{1\leq k \leq n} 
\bigg(\sum_{i = k}^n \hat{X}_i -  
\min_{\mathbf{T}_n \in \mathcal{U}_a}\sum_{i = k+1}^n T_i \bigg) 
\end{align}
Different from the service times, the interarrival times are from two different uncertainty sets and the number of updates from each source  depends on each other.
There do not exist two interarrival time sequences that achieve the maximum value of the system time for all possible $k$'s as we show in the single-source case.
As such, we define function $S(k)$ for all $k = 1, 2, \dots, n$ as 
\begin{equation}
S(k) \triangleq \sum_{i = k}^n \hat{X}_i - \min_{\mathbf{T}_n \in \mathcal{U}_a} \sum_{i = k+1}^n T_i,
\end{equation}  
and maximize the value of $S(k)$ for every $1\leq k\leq n$.
Therefore, the worst-case system time can be rewriten as
\begin{equation}\label{eq:worstbypk}
\hat{S}_n = \max_{1\leq k \leq n} S(k).
\end{equation}
Note that $S(k)$ is completely determined by the term of interarrival times' partial sum $T(k)$, which is defined as
\begin{equation}
T(k) \triangleq \min_{\mathbf{T}_n \in \mathcal{U}_a} \sum_{i = k+1}^n T_i.
\end{equation}
Then, deriving an upper bound of $S(k)$ is equivalent to deriving a lower bound of $T(k)$.
The partial sum of the interarrival times from source 1 and source 2 are both less than $T(k)$, i.e.,
\begin{align}\label{eq:two}
T(k) \geq \min _{\mathbf{T}_{{e_1}(n)}^{(1)} \in \mathcal{U}_{a}^{1}} \sum_{i=b_{1}(k)+1}^{e_{1}(n)} T_{i}^{(1)} \text { and }  \\  T(k) \geq \min _{\mathbf{T}_{{e_2}(n)}^{(2)} \in \mathcal{U}_{a}^{2}} \sum_{i=b_{2}(k)+1}^{e_{2}(n)} T_{i}^{(2)}.
\end{align} 
As shown in uncertainty sets $\mathcal{U}_{a}^{1}$ and $\mathcal{U}_{a}^{2}$, the lower bound of the partial sum of interarrival times depends on the number of updates.
The total number of updates from the $k$-th update to the $n$-th update is $(n - k + 1)$.
Let $h$ be the number of updates from source 2, i.e., $e_2(n) - b_2(k) + 1 = h$.
Correspondingly, we have $e_1(n) - b_1(k) + 1 = n -k + 1 - h$ updates from source $1$.
According to the uncertainty sets Eq.~\eqref{eq:set_multi_interarrival}, we have the following two lower bounds:
\begin{align}
\sum_{i=b_{1}(k)+1}^{e_{1}(n)} T_{i}^{(1)} & \geq \frac{e_{1}(n)-b_{1}(k)}{\lambda}-\Gamma_{a}\left(e_{1}(n)-b_{1}(k)\right)^{1 / \alpha} \notag \\
&=\frac{n-k-h}{\lambda}-\Gamma_{a}(n-k-h)^{1 / \alpha},
\label{eq:bound1m}\\
\sum_{i=b_{2}(k)+1}^{e_{2}(n)} T_{i}^{(2)} & \geq \frac{e_{2}(n)-b_{2}(k)}{\lambda}-\Gamma_{a}\left(e_{2}(n)-b_{2}(k)\right)^{1 / \alpha} \notag \\
&=\frac{h-1}{\lambda}-\Gamma_{a}(h-1)^{1 / \alpha}.
\label{eq:bound2m}
\end{align}
We denote the lower bounds in Eq.~\eqref{eq:bound1m} and Eq.~\eqref{eq:bound2m} as $L_1(h) =  (n - k - h)/{\lambda} - \Gamma_a(n - k - h)^{1/\alpha}$ and $L_2(h) = (h - 1)/{\lambda} - \Gamma_a(h-1)^{1/\alpha}$, respectively.
Therefore, we have 
\begin{align}
T(k)  
& \geq \min_{0 \leq h \leq n-k}  \max \{L_1(h), L_2(h)\}. \label{eq:minmax}
\end{align}
Note that both $L_1(h)$ and $L_2(h)$ are convex functions and they are symmetric along the line of $h = (n-k+1)/2$, which implies that the continuous minimizer of Eq.~\eqref{eq:minmax} is $h^{\prime} = (n-k+1)/2$.
Therefore, the lower bound for $T(k)$ is 
\begin{equation}\label{eq:tkbound}
T(k) \geq (n-k-1)/(2\lambda) - \Gamma_a((n-k-1)/2)^{1/\alpha}.
\end{equation}  
By replacing the partial sum $T(k)$ with its lower bound in Eq.~\eqref{eq:tkbound}, we define the upper bound of $S(k)$ as $\hat{S}(k)$, i.e., 
\begin{align} \label{eq:S_k}
\hat{S}(k) & = \sum_{i = k}^n \hat{X}_i -\frac{n-k-1}{2\lambda} + \Gamma_a(\frac{n-k-1}{2\lambda})^{1/\alpha} \notag\\
& = \frac{n-k+1}{\mu} + \Gamma_s (n-k+1)^{1/\alpha} \notag\\
& \quad  -\frac{n-k-1}{2\lambda} + \Gamma_a(\frac{n-k-1}{2})^{1/\alpha}.
\end{align}
Therefore, we have $S(k)\leq \hat{S}(k)$ for all $1\leq k\leq n$.
Then, we can rewrite the upper bound of the system time as 
\begin{equation} \label{eq:two-source-bound}
{\hat S_n} \leq \max_{1\leq k\leq n} \hat{S}(k).
\end{equation}
Similar to the proof of Theorem~\ref{theo:singlebound}, let $m = (n - k - 1){\rm{/2}} \in \left\{ { - {\rm{1/2}},0,{\rm{1/2}}, \ldots ,n/2 - 1} \right\}$.
Then, according to Eq.\eqref{eq:S_k} and the defined $f(m)$, 
Eq.\eqref{eq:two-source-bound} can be rewritten  as 
\begin{equation} \label{eq:two-source-sep-bound}
{{\hat S}_n} \le \mathop {\max }\limits_{ - \frac{1}{2} \le m \le \frac{n}{2} - 1} f(m) = \max \left\{ {f\left( { - \frac{1}{2}} \right),\mathop {\max }\limits_{0 \le m \le \frac{n}{2} - 1} f(m)} \right\}.
\end{equation}
Next, we consider the second item in max function of Eq.~\eqref{eq:two-source-sep-bound}. 
Let $m^*$ be the maximizer, i.e., ${f({m^*}) = \mathop {\max }\limits_{0 \le m \le \frac{n}{2} - 1} f(m)}$.
We can also extend $m \in \left\{ {0,1/2, \ldots ,n/2 - 1} \right\}$ to the real numbers,  i.e., $m\in R^+$.
It is easy to check that $f(m)$ is also concave since we have
\begin{equation}
f^{''}(m) = \frac{1-\alpha}{\alpha^2}\Big(   2\Gamma_s(m+1)^{\frac{1-2\alpha}{\alpha}} + \Gamma_a m^{\frac{1-2\alpha}{\alpha}}  \Big) < 0.
\end{equation}
Let $M \in R^+$ be the continuous maximizer of $f(m)$, i.e., 
\begin{equation}
f'\left( {M} \right) = \frac{2}{\mu } - \frac{1}{\lambda } + \frac{1}{2}\left( {2{\Gamma _s}{{(M + 1)}^{\frac{{1 - \alpha }}{\alpha }}} + {\Gamma _a}{{M}^{\frac{{1 - \alpha }}{\alpha }}}} \right) = 0.
\end{equation}
In order to obtain the range of $M$,  let $l$ be the solution of the following equation, 
\begin{equation}\label{eq:der2}
\frac{2}{\mu } - \frac{1}{\lambda } + \frac{1}{\alpha }\left( {2{\Gamma _s}{m^{\frac{{1 - \alpha }}{\alpha }}} + {\Gamma _a}{m^{\frac{{1 - \alpha }}{\alpha }}}} \right) = 0,
\end{equation}
which gives 
\begin{equation}
l = {\left( {\frac{{\alpha (1/\lambda  - 2/\mu )}}{{{\Gamma _a} + 2{\Gamma _s}}}} \right)^{\frac{\alpha }{{1 - \alpha }}}}.
\end{equation}
Similar to the single-source case, we have $l - 1 < M < l$.
Therefore, there must be ${m^*} \in \{ \left\lfloor l \right\rfloor  - 1,\left\lfloor l \right\rfloor  - 1/2,\left\lfloor l \right\rfloor ,\left\lfloor l \right\rfloor  + 1/2,\left\lfloor l \right\rfloor  + 1\} $. 
Recall that the concave function $f(m)$ is defined on $[0,n/2 - 1]$. 
If ${n/2 - 1 \le \left\lfloor l \right\rfloor  - 1}$, then we have ${{\hat S}_n} \le f\left( {n/2 - 1} \right)$; otherwise, ${{m^*}}$ must be a value among $\left\{ {\left\lfloor l \right\rfloor  - 1,\left\lfloor l \right\rfloor  - 1/2,\left\lfloor l \right\rfloor ,\left\lfloor l \right\rfloor  + 1/2,\left\lfloor l \right\rfloor  + 1} \right\}$ that satisfies ${{m^*} \le n/2 - 1}$ and  achieves the maximum of $f(m)$. 
Together with  Eq.~\eqref{eq:two-source-sep-bound} and the fact that the system time is nonnegative, we complete the proof.  
\end{proof}

\textcolor{black}{By choosing appropriate variability parameters ($\Gamma_a$ and $\Gamma_s$), we can accurately approximate the expected system time and thus the expected PAoI at each monitor, given that the expected interarrival time is already known (i.e., $1/\lambda$).}



\section{Numerical Results}\label{sec:simulation}

In this section, we perform extensive simulations to evaluate the accuracy of our theoretical results.
We first introduce how to approximate the expected system time of the steady-state queueing networks with the worst-case system times proposed in Theorem~\ref{theo:singlebound} or Theorem~\ref{theo:multibound}.
Then, for the single-source system, we show that our results in Theorem~\ref{theo:singlebound} can approximate the expected PAoI much better than the bounds in the literature (e.g., one bases on the Kingman's bound), especially in the light load case.
In the end, we also show that the bound in Theorem~\ref{theo:multibound} can also approximate the expected PAoI in two-source case very well.


\subsection{Variability Parameters}
\label{subsec:vp}
Note that the bounds in Theorem~\ref{theo:singlebound} and Theorem~\ref{theo:multibound} do not depend on the specific distribution of the interarrival time and service time.
The update arrival process and service process are fully characterized by the primitive data $(\lambda, \sigma_a^2)$ and $(\mu, \sigma_s^2)$, respectively, where $\sigma_a^2$ and $\sigma_s^2$ denote the variance of the interarrival time and service time, respectively. Therefore, it remains to translate the stochastic primitive data into uncertainty sets with appropriate variability parameters $(\Gamma_a, \Gamma_s)$ such that the bounds proposed in Theorem~\ref{theo:singlebound} and Theorem~\ref{theo:multibound} can approximate the expected system time of steady-state queueing networks well.

Inspired by the Kingman's bound\footnote{The Kingman's bound shows that the expected system time can be bounded by $\mathbb{E}\left[S\right] \leqslant \frac{\lambda}{2} \cdot \frac{\sigma_{a}^{2}+\sigma_{s}^{2}}{1-\rho}+\frac{1}{\mu}$.}\cite{kingman1970inequalities}, a mapping function is provided in \cite{robust} that describes the variability parameters in terms of the distributions' first and second statistics,
\begin{equation}
    \Gamma_a = \sigma_a, \Gamma_s = (\theta_0 + \theta_1\sigma_s^2 + \theta_2\sigma_a^2\rho^2)^{1/2}-\sigma_a,
\end{equation}
where $\rho = \lambda/\mu$ is the traffic density and $({\theta _0},{\theta _1},{\theta _2})$  are constants that can be derived from linear regression.
Specifically, in order to obtain appropriate values of $({\theta _0},{\theta _1},{\theta _2})$, we first simulate multiple instances of the queue for various parameters of $(\rho ,{\sigma _a},{\sigma _s})$ and different arrival and service distributions. Then we employ the linear regression to generate appropriate values for $({\theta _0},{\theta _1},{\theta _2})$ to adapt the value ${{\hat S}_n}$ obtained in Theorem~\ref{theo:singlebound} (or Theorem~\ref{theo:multibound}) to the expected value of the simulated system time. 
This allows us to build a dictionary or a look-up table of variability parameters values for given arrival and service distributions that makes the following approximation $\mathbb{E}[S(\mathbf{T},\mathbf{X})] \approx {{\hat S}_n}({\Gamma _a},{\Gamma _s})$. Table \ref{table:adaption} provides the resulting $({\theta _0},{\theta _1},{\theta _2})$ for each adaption regimes.

\begin{table}[!t]
\centering
\begin{tabular}{|c|c|c|}
\hline
$({\theta _0},{\theta _1},{\theta _2})$ & Single-Source & Two-Source \\ \hline
${\theta _0}$ & -0.376 & -1.302 \\ \hline
${\theta _1}$ & 3.978 & 6.021 \\ \hline
${\theta _2}$ & 0.5 & 0.7 \\ \hline
\end{tabular}
\caption{Service adaptation regimes.}
\label{table:adaption}
\end{table}

\begin{table}[!t]
\centering
\begin{tabular}{|c|c|c|c|c|}
\hline
\multicolumn{1}{|l|}{}                                                   & Methods                 & Exponential & Normal     & Uniform    \\ \hline
\multirow{3}{*}{\begin{tabular}[c]{@{}c@{}}Single-\\ Source\end{tabular}} & Kingman's bound        & $33.86\% $  & $22.58\% $ & $14.89\% $ \\ \cline{2-5}                        & Robust Approx. 1 & $32.01\% $  & $34.90\% $ & $36.49\% $ \\ \cline{2-5} 
& Robust Approx. 2 & $8.32\% $   & $8.47\% $  & $9.28\% $  \\ \hline
\begin{tabular}[c]{@{}c@{}}Two-\\ Source\end{tabular}   & Robust Approx. 3 & $12.68\% $  & $10.05\% $ & $9.79\% $  \\ \hline
\end{tabular}
\caption{Error percent of different approximation methods.}
\label{table:error_percent_single_source}
\end{table}

\begin{figure*}[t]
\centering
\begin{subfigure}[b]{0.329\linewidth}
	\centering
	\includegraphics[width=1\textwidth]{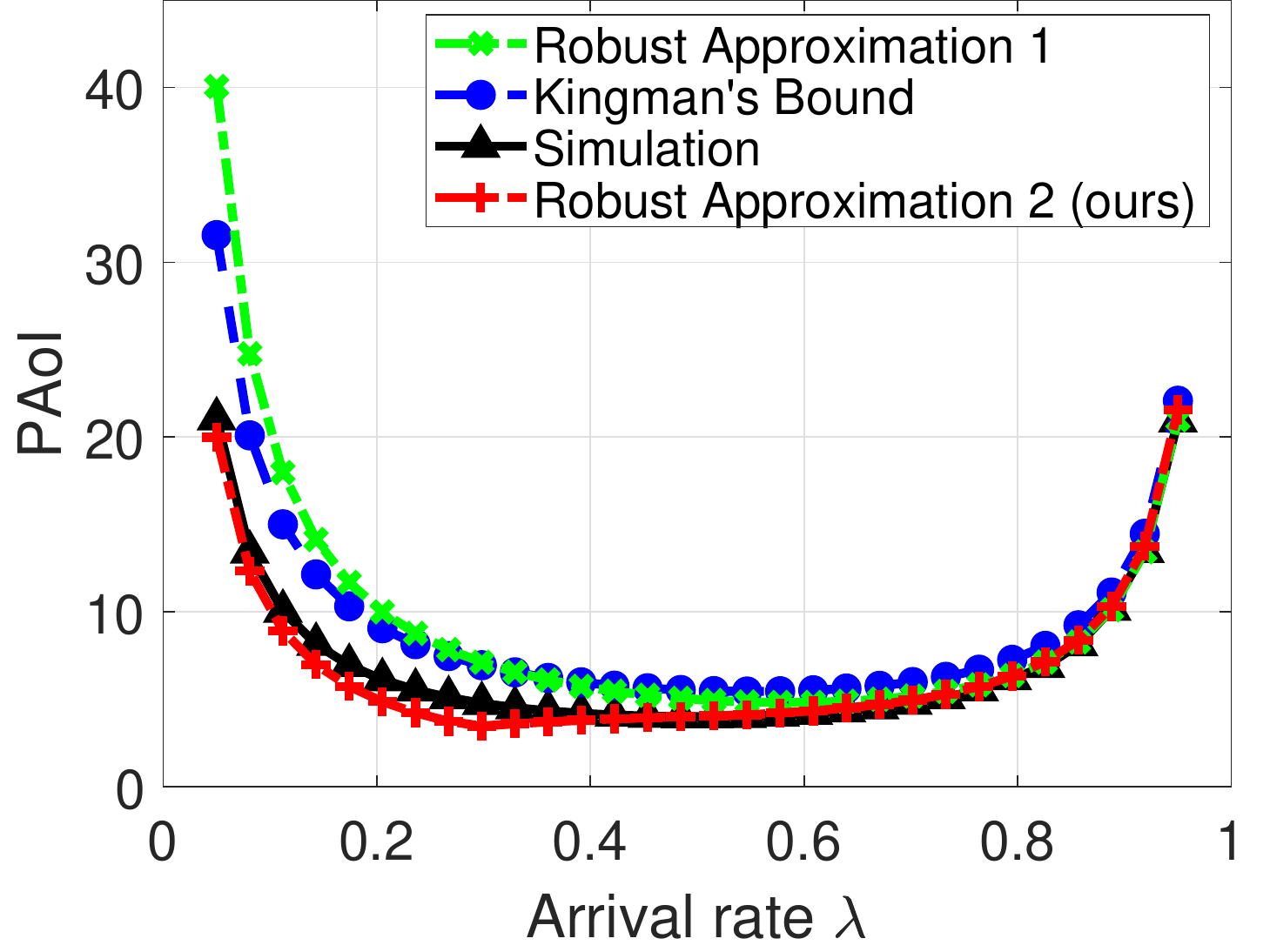}
	\caption{Exponential distribution}
	\label{fig:single_poisson}
\end{subfigure}
\begin{subfigure}[b]{0.329\linewidth}
	\centering
	\includegraphics[width=1\textwidth]{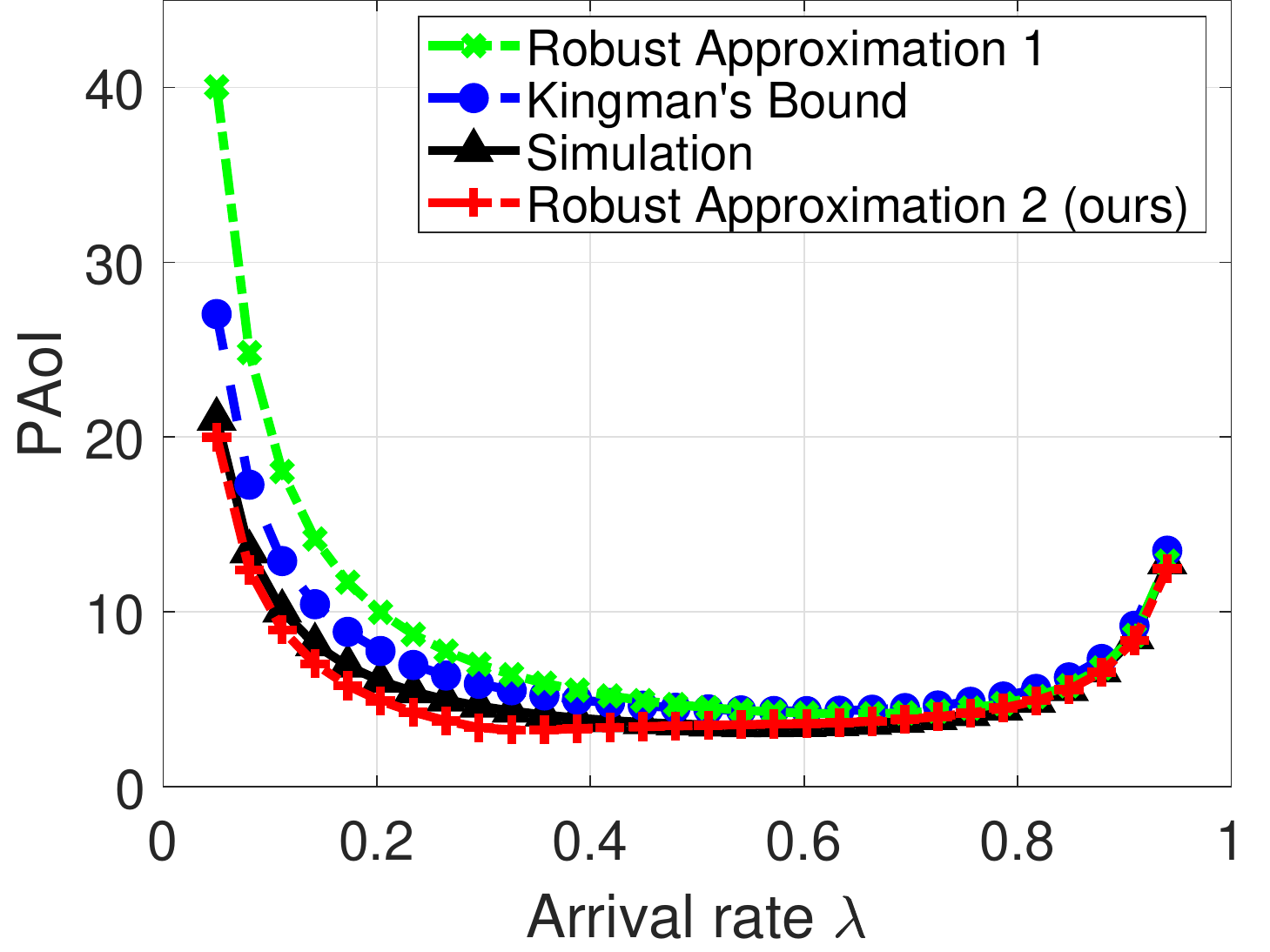}
	\caption{Normal distribution}
	\label{fig:single_halfnormal}
\end{subfigure}
\begin{subfigure}[b]{0.329\linewidth}
	\centering
	\includegraphics[width=1\textwidth]{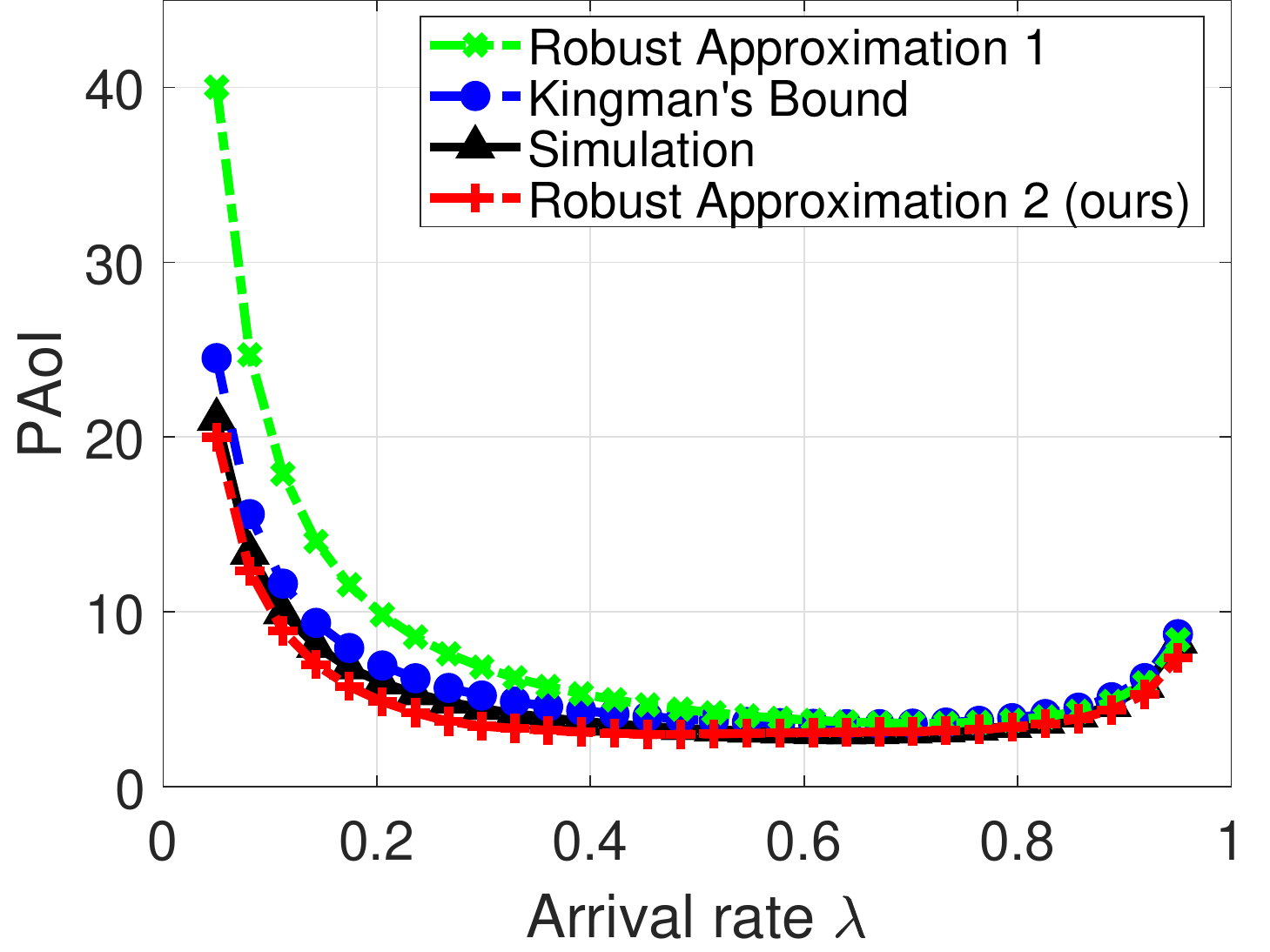}
	\caption{Uniform distribution}
	\label{fig:single_paretoT}
\end{subfigure}
\caption{Peak AoI under different distributions in the single-source setting.}
\label{fig:single}
\end{figure*}

\begin{figure*}[t]
\centering
\begin{subfigure}[b]{0.329\linewidth}
	\centering
	\includegraphics[width=1\textwidth]{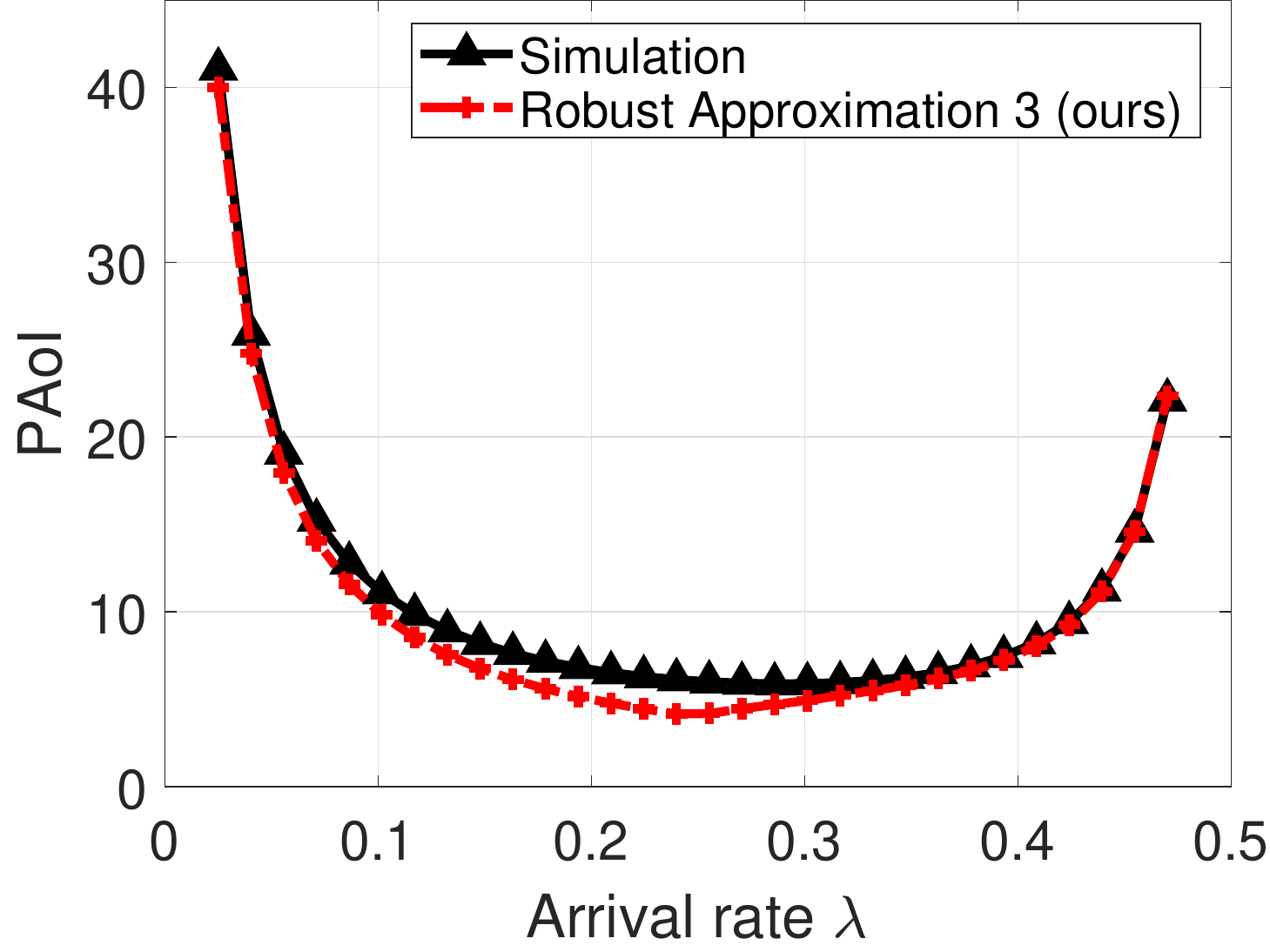}
	\caption{Exponential distribution}
	\label{fig:multi_poisson}
\end{subfigure}
\begin{subfigure}[b]{0.329\linewidth}
	\centering
	\includegraphics[width=1\textwidth]{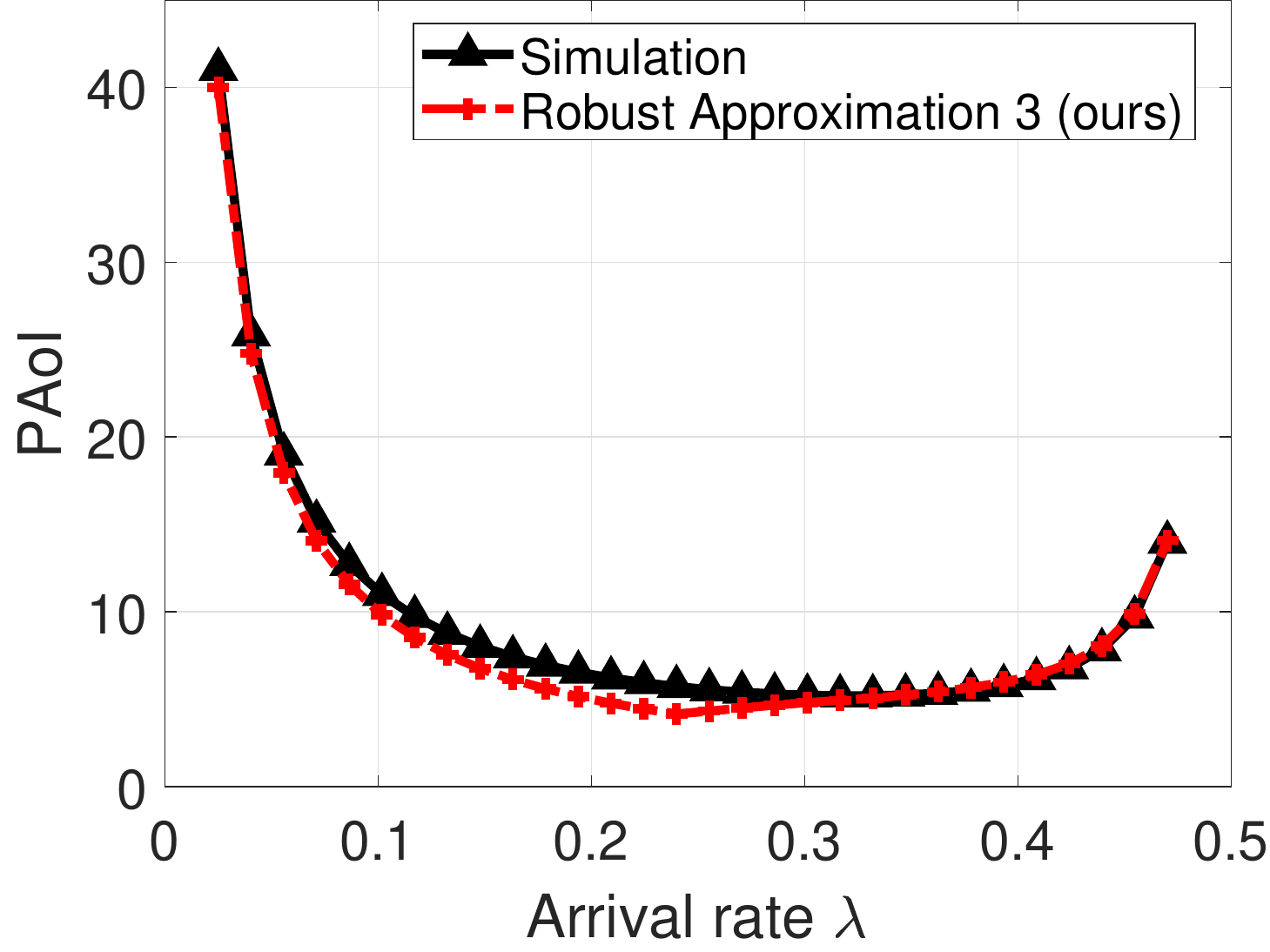}
	\caption{Normal distribution}
	\label{fig:multi_halfnormal}
\end{subfigure}
\begin{subfigure}[b]{0.329\linewidth}
	\centering
	\includegraphics[width=1\textwidth]{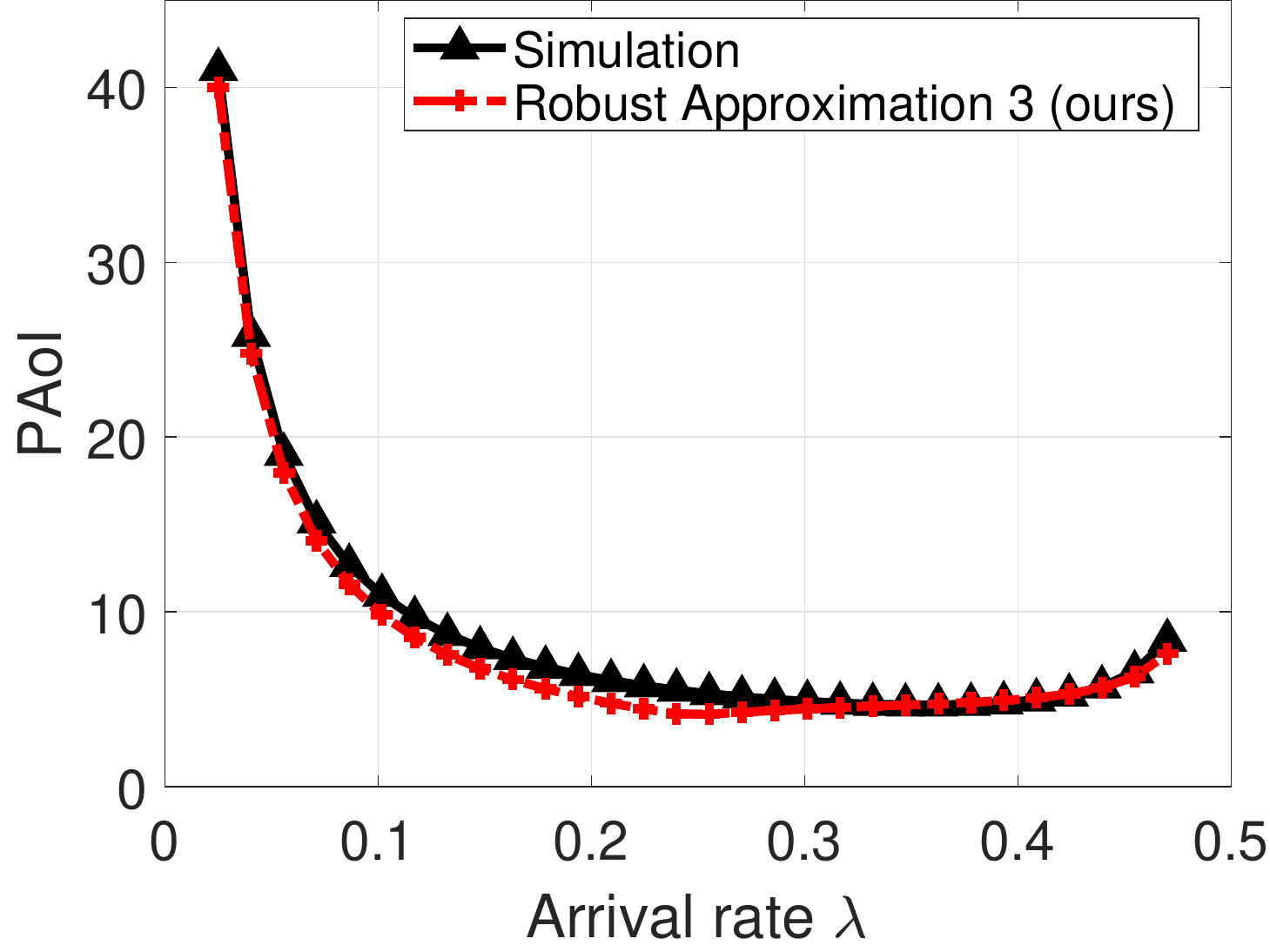}
	\caption{Uniform distribution}
	\label{fig:multi_paretoT}
\end{subfigure}
\caption{Peak AoI under different distributions in the two-source setting with symmetric arrivals.}
\label{fig:multi}
\end{figure*}

\subsection{Single-Source Systems}

We first consider the single-source setting, where the source generates updates with rate $\lambda$, and the service rate of the server is fixed with $\mu = 1$.
We consider three different distributions for the interarrival times and service times: i) exponential distribution, ii) normal distribution, and iii) uniform distribution.
For the normal distribution, we take the absolute value so that only positive interarrival time and service time are used.
We consider the normal distribution and uniform distribution in order to show that our bound does not rely on the properties of specific distributions (e.g., the memoryless property of the exponential distribution).

In Fig.~\ref{fig:single}, we plot the PAoI as the arrival rate $\lambda$ increases when the service rate $\mu = 1$. 
We name the approximation derived from \cite{robust} and Theorem~\ref{theo:singlebound} as \emph{Robust Approximation 1} and \emph{Robust Approximation 2}, respectively.
Note that the variability parameters used in our Robust Approximation 2 are from Table~\ref{table:adaption}.
Then, we compare the Kingman's bound,  Robust Approximation 1 and Robust Approximation 2 against the simulated PAoI. 
Though the Kingman's bound and  Robust Approximation 1 are originally defined for the expected system time, they can also be used as the PAoI bounds by simply adding the expected interarrival time $1/\lambda$.
In order to demonstrate the detailed performance of different bounds, the error percent (which is the sum of the difference between the simulated PAoI and the bound normalized by the simulated PAoI in each arrival rate, divided by the total number of arrival rates) of each bound are also given in Table~\ref{table:error_percent_single_source}. 
We run the simulations with a large number of arrivals ($n>10^4$) to ensure that the steady state is reached.
For each setting, we run the simulation for $50$ rounds and take the average. 

First, we can see that the PAoI is large under both light and high loads for all the considered distributions. This is because under the light load, the update arrival rate is low and the interarrival time between updates is large, which leads to a large PAoI; while under the high load, lots of updates wait to be served in the queue and the delay of those updates becomes large, which also leads to a high PAoI. This implies that we need to approximate PAoI well under both light and high loads. 
Second, we observe that Robust Approximation 2  we proposed can approximate the PAoI very well in all different settings.
From Fig.~\ref{fig:single_halfnormal} and Fig.~\ref{fig:single_paretoT}, we can see that the robust queueing approach is quite general and works well under normal and uniform distributions as well. 

Compared to Robust Approximation 1, Robust Approximation 2 is much closer to the simulation results when the traffic load is light.
This complies with our theoretical analysis, since Robust Approximation 2  cares about both light load and high load while Robust Approximation 1 is developed more towards approximating the system time under the high load.
For the Kingman's bound, it is known that the result is tight with light-tailed distribution in high load.
However, similar to Robust Approximation 1, the approximation of Kingman's bound is also loose in the light load.
Therefore, our proposed bound provides a competitive alternative to approximating the PAoI performance in information-update systems.

\subsection{Two-Source Systems}

In this subsection, we consider the information-updating system with two symmetric sources and one single server. Here the two sources generate updates with the same interarrival time distribution (also the same expected interarrival time 1/$\lambda$) and the same service time distribution (also the same expected service time 1/$\mu$), i.e., they are symmetric. 
Similar to the single-source case, we also consider three different distributions, where the interarrival times and service times are both exponential distribution, normal distribution and uniform distribution, respectively.
We name the approximation derived from Theorem~\ref{theo:multibound} as \textit{Robust Approximation 3}, which adopts the variability parameters in Table~\ref{table:adaption}.
With increasing arrival rate $\lambda$ and fixed service rate $\mu = 1$, the PAoI performances of one arbitrary source are shown in Fig.~\ref{fig:multi} and Table~\ref{table:error_percent_single_source}.

Note that the Kingman's bound and Robust Approximation~1 cannot be used to approximate the PAoI in the two-source setting since they are only derived for the single-source systems. Again, we observe that Robust Approximation~3  can approximate the average PAoI well under both light and high traffic loads.


\section{Conclusion}\label{sec:conclusion}
In this paper, we applied the robust queueing theory to analyzing the AoI performance in the communication systems.
By modeling the uncertainty in the stochastic arrival and service processes using  uncertainty sets, we provided a robust bound of the worse-case delay that can be used to approximate the expected PAoI in the single-source single-server systems.
Furthermore, we generalized our bound to the two-source single-server systems with symmetric arrivals. 
We showed that our bounds work well under both light and high loads and outperform prior bounds (e.g., the Kingman's bound), which could be quite inaccurate under the light load. 
Moreover, our results do not rely on the property of specific distributions and can thus be widely applied to more general settings.

There are several of interesting questions that are worth future work. For example, we only show that the robust queueing bound works well for the light-tailed systems in the simulations. However, the robust queueing theory is also capable of modeling the heavy-tailed behaviors. It would be interesting to investigate the effectiveness of the proposed bounds in the heavy-tailed systems. 
\textcolor{black}{In addition, we assume that  the service time and interarrival time have the same tail coefficient. It would be nice to generalize the results for the case where the service time and interarrival time have different tail coefficients. Finally, it is also worth studying the impact of the dependence between the arrival and service process \cite{whitt2018using}.
}


\bibliographystyle{IEEEtran}
\bibliography{reference}

\end{document}